\documentclass[twocolumn,10pt,superscriptaddress,amsmath,amssymb,aps,prl,floatfix,longbibliography]{revtex4-2}

\usepackage{amsmath,amssymb,amsfonts,amsthm}
\usepackage{graphicx}
\usepackage{bm}
\usepackage[colorlinks=true,linkcolor=blue,citecolor=blue,urlcolor=blue]{hyperref}
\usepackage{xcolor}

\newcommand{\tr}{\mathrm{Tr}}
\newcommand{\var}{{\rm Var}}
\newcommand{\vect}[1]{\mathbf{#1}}

\newtheorem{theorem}{Theorem}
\newtheorem{lemma}{Lemma}
\newcommand{\n}{\nonumber\\}
\newcommand{\ket}[1]{|#1\rangle}
\newcommand{\bra}[1]{\langle#1|}

\newcommand{\wex}[2]{\langle #1, #2 \rangle_w}

\newcommand{\BC}{\mathcal B_{\rm C}}
\newcommand{\BQ}{\mathcal B_{\rm Q}}
\newcommand{\FC}{\mathcal F_{\rm C}}

\newcommand{\BCd}{\mathcal B_{\rm C}^{\rm GCR}}
\newcommand{\BCc}{\mathcal B_{\rm C}^{\rm GBar}}
\newcommand{\BCg}{\mathcal B_{\rm C}^{\rm FG}}

\newcommand{\BQd}{\mathcal B_{\rm Q}^{\rm GCR}}
\newcommand{\BQc}{\mathcal B_{\rm Q}^{\rm GBar}}
\newcommand{\BQg}{\mathcal B_{\rm Q}^{\rm FG}}
\begin{document}

\title{Global Bounds beyond Local Quantum Metrology}

\author{Hai-Long Shi}
\email{hailong.shi@ino.cnr.it}
\affiliation{INO-CNR and LENS, Largo Enrico Fermi 2, 50125 Firenze, Italy}

\author{Augusto Smerzi}
\email{augusto.smerzi@ino.cnr.it}
\affiliation{College of Engineering Physics, Shenzhen Technology University, Shenzhen 518118, China}

\begin{abstract}
Quantum Cramér--Rao theory is intrinsically local: it bounds precision near a specified parameter value, and its saturating measurement generally depends on that value. Barankin-type bounds use finite parameter displacements, but remain anchored to a chosen reference value. This leaves open a basic global-estimation problem: when the parameter is known only within a broad domain, what precision can be guaranteed by a single estimator and a single measurement strategy fixed before the true value is localized? We answer this question by introducing global score functions tied to a weighted variance over the whole parameter domain. 
Their correlations generate a hierarchy of precision bounds: global Cramér--Rao and Barankin-type bounds arise as restricted levels, whereas unrestricted score correlations yield a fully global bound for the prescribed weighted variance. The hierarchy recovers local Cramér--Rao theory in the many-repetition limit and reveals genuinely global precision limits for finite data over broad domains. In the quantum setting, the construction identifies when this fully global bound can be realized by a single parameter-independent measurement. 
The same framework extends to Bayesian estimation, recovering the Van Trees bound in the local limit while yielding stronger finite-width lower bounds on the Bayesian mean-square error beyond this limit.
\end{abstract}

\maketitle

\textit{Introduction}.--
Quantum metrology turns quantum resources, such as entanglement and squeezing, into enhanced sensitivity~\cite{RevModPhys.90.035005,toth2014quantum,paris2009quantum,huang2024entanglement,PhysRevLett.96.010401,PhysRevLett.102.100401,montenegro2025quantum,RevModPhys.89.035002,giovannetti2011advances}.
Its standard benchmark is the quantum Cram\'{e}r--Rao bound (QCRB), relating the variance of an unbiased estimator to the inverse quantum Fisher information~\cite{PhysRevLett.72.3439,helstrom1969quantum,holevo2011probabilistic,cramer1999mathematical,rao1945information,fisher1920mathematical}.
This theory is intrinsically local: it bounds estimation around a specified parameter value, and the saturating measurement generally depends on that value~\cite{PhysRevLett.72.3439,PhysRevA.111.022436,zhou2020saturating,liu2025optimal,wiseman2009quantum,PhysRevLett.75.2944,demkowicz2015quantum,PhysRevA.54.4564}.
Thus, the local Cram\'er--Rao bound is operationally sharp only after the unknown parameter has been confined to a sufficiently small region.

Many metrological tasks begin outside this regime.
In phase estimation, frequency calibration, thermometry, and related sensing problems, the parameter may initially be known only over a broad domain~\cite{PhysRevLett.85.5098,PhysRevLett.124.030501,pang2017optimal,PhysRevResearch.6.043171,PhysRevA.95.012136,PhysRevLett.127.190402,PhysRevLett.126.200501,PhysRevA.108.012613}.
Experiments also often operate with finite data, where asymptotic local theory does not determine actual estimator performance~\cite{PhysRevA.101.032114,qbn1-p6bq,PhysRevResearch.6.043261,PhysRevLett.134.010804}.
A large quantum Fisher information (QFI) at one point neither guarantees distinguishability over the full domain nor identifies a measurement useful before the true parameter is known~\cite{PhysRevResearch.6.L032048,hayashi2025cramer,PhysRevLett.108.230401}.
The relevant object is therefore not only the local statistical speed between neighboring states, but the global distinguishability structure of the full family~\cite{mukhopadhyay2025current}.

This exposes a structural limitation of standard frequentist metrology.
Local precision theory is anchored to the unknown true value: the variance, Fisher information, and locally optimal measurement are all defined only after choosing a reference point.
Barankin-type bounds use finite parameter displacements and therefore probe nonlocal distinguishability~\cite{barankin1949locally,PhysRevLett.130.260801,bhattacharyya1946some,chapman1951minimum}, but their optimized likelihood-response directions remain defined with respect to that point~\cite{barankin1949locally}.
In the quantum setting, this reference dependence also affects the measurement: as for the QCRB~\cite{PhysRevLett.72.3439}, a measurement saturating a quantum Barankin-type bound generally depends on the true parameter~\cite{PhysRevLett.130.260801}.
Thus, finite-displacement bounds enlarge the score directions, but do not by themselves solve the global measurement-design problem.

In global metrology, the variance criterion, estimator, and measurement must be specified before the unknown parameter is localized to a neighborhood of any given value.
Our construction implements this requirement directly: it is not a weighted average of Barankin bounds evaluated at different reference values.
Rather, it defines a genuinely global finite-data task, with the figure of merit, estimator, and measurement fixed over the full parameter domain as a single operational strategy.

We develop such a framework by replacing the true-value-dependent score with a global score field over the parameter domain, tied to the global figure of merit \( \sum_i w_i\var_{\theta_i}(\hat\theta) \).
Couplings between score components at different parameter values generate a hierarchy of precision bounds.
Restricted couplings recover weighted global Cramér--Rao and Barankin-type bounds, while unrestricted couplings yield a fully global bound controlled by correlations across the full domain.
The hierarchy reduces to local Cramér--Rao theory in the many-repetition limit, but captures irreducibly global estimator fluctuations in the finite-data, broad-domain regime.

Optimizing over measurements gives the quantum counterpart.
While quantum Cramér--Rao and quantum Barankin saturation generally require measurements optimized at the true parameter value, the fully global problem asks whether one measurement, fixed before the true value is known, can realize the prescribed variance bound over the whole domain.
We identify a broad compatible class where this is possible: for the common-bias fully global problem, the saturating measurement is fixed by full-domain score correlations rather than by the local quantum Fisher information at one point.
The same score principle extends to Bayesian estimation, recovering Van Trees as a local limit and yielding tighter finite-width bounds.


\textit{Classical global-score hierarchy}.---
We derive the classical hierarchy for a frequentist figure of merit over the full parameter domain.
Let \( \theta\in\Theta=\{\theta_1,\ldots,\theta_n\}\subseteq\mathbb R \), with the continuous case obtained by refinement, and let \(p(x|\theta)\) be the outcome distribution.
A single estimator \( \hat\theta(x) \) is used for all parameter values, with global performance
\begin{equation}\label{eq:weighted_var}
\overline{\var}(\hat\theta)
=
\sum_i w_i\,\var_{\theta_i}[\hat\theta],
\end{equation}
where \(w_i\ge0\), \( \sum_i w_i=1 \), and
\(\var_{\theta_i}[\hat\theta]\) denotes the variance of the estimator with respect to \(p(x|\theta_i)\).
The weights define a frequentist risk, not a Bayesian prior.

Let \( \mathbf g(x)=(g_1(x),\ldots,g_n(x))^\top \) collect test functions, and let \( \mathbf a_i=(A_{1i},\ldots,A_{ni})^\top \) be the \(i\)-th column of a real hierarchy matrix \(A\).
At the parameter value \( \theta_i \), we test the estimator fluctuation with
\begin{equation}\label{eq:score_func}
S_i(x)
=
\frac{\mathbf a_i^\top\mathbf g(x)}{p(x|\theta_i)} .
\end{equation}
The matrix \(A\) couples the score directions associated with different parameter values. Diagonal choices recover reference-value score tests, while off-diagonal entries generate genuinely global score directions.

Applying the Cauchy--Schwarz inequality to
\( \Delta_i\hat\theta=\hat\theta-\mathbb E_{\theta_i}(\hat\theta) \)
and \(S_i\), with weights \(w_i\), we obtain the general global-score bound
\begin{equation}\label{main_1}
\overline{\var}(\hat\theta)
\ge
\frac{
\left(\sum_i w_i\,\mathbf a_i^\top\mathbf b^{(i)}\right)^2
}{
\sum_i w_i\,\mathbf a_i^\top\mathcal C^{(i)}\mathbf a_i
}
\equiv
\mathcal B_{\rm C}(A),
\end{equation}
where
\begin{equation}
\mathcal C^{(i)}
\!=\!
\sum_x
\frac{\mathbf g(x)\mathbf g^\top(x)}{p(x|\theta_i)},\,     \mathbf b^{(i)}
\!=\!
\sum_x
\mathbf g(x)
\left[
\hat\theta(x)-\mathbb E_{\theta_i}(\hat\theta)
\right].\nonumber
\end{equation}
Here \( \mathcal C^{(i)} \) is the information matrix of the chosen score directions at \( \theta_i \), and \( \mathbf b^{(i)} \) gives their overlaps with the estimator fluctuation. Different restrictions on the hierarchy matrix \(A\) select different levels of the global hierarchy.

\textit{Global Cramér--Rao bound}.---
The first restricted level restricts \(A\) to be diagonal,
\(A=\operatorname{diag}(a_1,\ldots,a_n)\), so that each score \(S_i\) uses only the test function associated with \(\theta_i\).
The bound~\eqref{main_1} then becomes a Rayleigh quotient in the coefficients \(a_i\), whose supremum gives
\begin{equation}\label{eq:diagonal_bound}
\mathcal B_{\rm C}^{\rm GCR}
=
\sup_{\mathbf a\in\mathbb R^n}
\mathcal B_{\rm C}[\operatorname{diag}(a_1,\ldots,a_n)]
=
\sum_{i=1}^n w_i
\frac{\big(b_i^{(i)}\big)^2}{[\mathcal C^{(i)}]_{ii}} .
\end{equation}
Hence, \( \overline{\var}(\hat\theta)\ge \mathcal B_{\rm C}^{\rm GCR} \), where \( b_i^{(i)} \) is the \(i\)-th component of \( \mathbf b^{(i)} \).
For the local score choice \( g_i(x)=\partial_\theta p(x|\theta_i) \), one has
\(b_i^{(i)}=\partial_\theta \mathbb E_{\theta_i}(\hat\theta)\) and
\([\mathcal C^{(i)}]_{ii}=\mathcal F_{\rm C}(\theta_i)\), with \( \mathcal F_{\rm C} \) the classical Fisher information.
Thus,
\begin{equation}\label{eq:global_cr_bound}
\overline{\var}(\hat\theta)
\ge
\sum_{i=1}^n
w_i
\frac{
\big[\partial_\theta \mathbb E_{\theta_i}(\hat\theta)\big]^2
}{
\mathcal F_{\rm C}(\theta_i)
}.
\end{equation}
This is the global Cramér--Rao level, namely a weighted global version of the biased Cramér--Rao bound, global in the risk through the sum over \( \theta_i \), but local in the score structure because each term uses only the tangent direction at the same parameter value.

\textit{Global Barankin bound}.---
The second restricted level is obtained by taking identical columns,
\( A=(\mathbf a,\mathbf a,\ldots,\mathbf a) \), with \( \mathbf a\in\mathbb R^n \), so that the same nonlocal score combination is tested at every parameter value.
The bound~\eqref{main_1} becomes a Rayleigh quotient in \( \mathbf a \), whose supremum gives
\begin{equation}\label{eq:column_bound}
\mathcal B_{\rm C}^{\rm GBar}
=
\sup_{\mathbf a\in\mathbb R^n}
\mathcal B_{\rm C}[(\mathbf a,\ldots,\mathbf a)]
=
\mathbf b_w^\top \mathcal C_w^+ \mathbf b_w ,
\end{equation}
where
\( \mathbf b_w=\sum_i w_i\,\mathbf b^{(i)} \),
\( \mathcal C_w=\sum_i w_i\,\mathcal C^{(i)} \), and
\( \mathcal C_w^+ \) is the Moore--Penrose pseudoinverse.
Hence \( \overline{\var}(\hat\theta)\ge \mathcal B_{\rm C}^{\rm GBar} \).

This global Barankin level is nonlocal in the score directions, since \( \mathbf a \) combines test functions attached to different parameter values.
For \(w_1=1\), Eq.~\eqref{eq:column_bound} reduces to the standard Barankin bound at the reference value $\theta_1$~\cite{barankin1949locally}.
For general weights, it is the Barankin-type global bound associated with the prescribed weighted variance, but remains restricted because the same score combination is tested at every parameter value.

\textit{Fully global bound}.---
The unrestricted level allows arbitrary hierarchy matrices \(A\). The bound is then a Rayleigh quotient in all columns \( \{\mathbf a_i\} \), whose supremum gives
\begin{equation}\label{eq:general_bound}
\mathcal B_{\rm C}^{\rm FG}
=
\sup_A \mathcal B_{\rm C}(A)
=
\sum_i w_i\,\mathbf b^{(i)\top}
\big(\mathcal C^{(i)}\big)^+
\mathbf b^{(i)} .
\end{equation}
Hence \( \overline{\var}(\hat\theta)\ge \mathcal B_{\rm C}^{\rm FG} \), where \( (\mathcal C^{(i)})^+ \) is the Moore--Penrose pseudoinverse. This fully global level imposes no structural restriction on how score directions associated with different parameter values are combined.
Thus, the three levels obey
\begin{equation}\label{eq:hierarchy}
\overline{\var}(\hat\theta)
\ge
\mathcal B_{\rm C}^{\rm FG}
\ge
\mathcal B_{\rm C}^{\rm GBar},
\quad
\overline{\var}(\hat\theta)
\ge
\mathcal B_{\rm C}^{\rm FG}
\ge
\mathcal B_{\rm C}^{\rm GCR}.
\end{equation}
These inequalities follow because the diagonal and column-constant choices of \(A\) are restricted subsets of the unrestricted class.

The fully global level also admits a constructive saturation condition. For globally unbiased estimators, one may choose an anchor point and impose the parameterwise Cauchy--Schwarz equality condition directly in outcome space. The problem then reduces to a range condition for an anchor-dependent linear map, together with a compatibility condition fixing the corresponding global unbiasedness relation. This gives an explicit route to unbiased saturation, detailed in the Supplemental Material (SM).

We now compare two limits of the hierarchy. Consider independent data \( \mathbf x=(x_1,\ldots,x_m) \), with \( p(\mathbf x|\theta)=\prod_{\ell=1}^m p(x_\ell|\theta) \), where \(m\) is the number of repetitions and \(n=|\Theta|\) is the number of sampled parameter values.


\textit{Observation~1}.---
For fixed finite \( n \) and local scores \( g_j(x)=\partial_\theta p(x|\theta_j) \), the fully global bound reduces in the many-repetition limit to the global Cramér--Rao level,
\begin{equation}\label{eq:observation_1}
\mathcal B_{\rm C}^{\rm FG}
\overset{m\gg1}{\simeq}
\mathcal B_{\rm C}^{\rm GCR}.
\end{equation}
Under the regularity assumptions stated in the SM, this bound is saturated asymptotically by the maximum-likelihood estimator. This follows from likelihood concentration. As \(m\) increases, data generated at \( \theta_i \) become exponentially distinguishable from data generated at separated parameter values, so the information matrices are dominated by the local score direction at \( \theta_i \). Hence the global score correlations distinguishing \( \mathcal B_{\rm C}^{\rm FG} \) from \( \mathcal B_{\rm C}^{\rm GCR} \) become asymptotically irrelevant, recovering the standard large-sample reduction to a local Cramér--Rao problem~\cite{PhysRevA.61.042312}.

\textit{Observation~2}.---
Consider the opposite regime, where \(m\) is fixed, including nonasymptotic values for which the data do not localize the parameter, while \( \Theta=\{\theta_1,\ldots,\theta_n\} \) gives an increasingly dense sampling of a broad domain.
The problem then cannot be reduced to a neighborhood of the true value.
Errors between separated parameter values remain relevant, and estimator fluctuations must be controlled over the whole domain.
As \(n\) increases, the global score family may span the fluctuations required for saturation.
When the corresponding compatibility, or range, condition is satisfied, the fully global level becomes attainable,
\begin{equation}\label{eq:observation_2}
\overline{\var}(\hat\theta)
\simeq
\mathcal B_{\rm C}^{\rm FG}.
\end{equation}
Thus, in finite-data broad-domain estimation, refining the sampled domain can drive the attainable variance toward the fully global bound, provided the score-space compatibility condition derived in the SM holds.

\begin{figure}[t]
\centering
\includegraphics[width=0.9\linewidth]{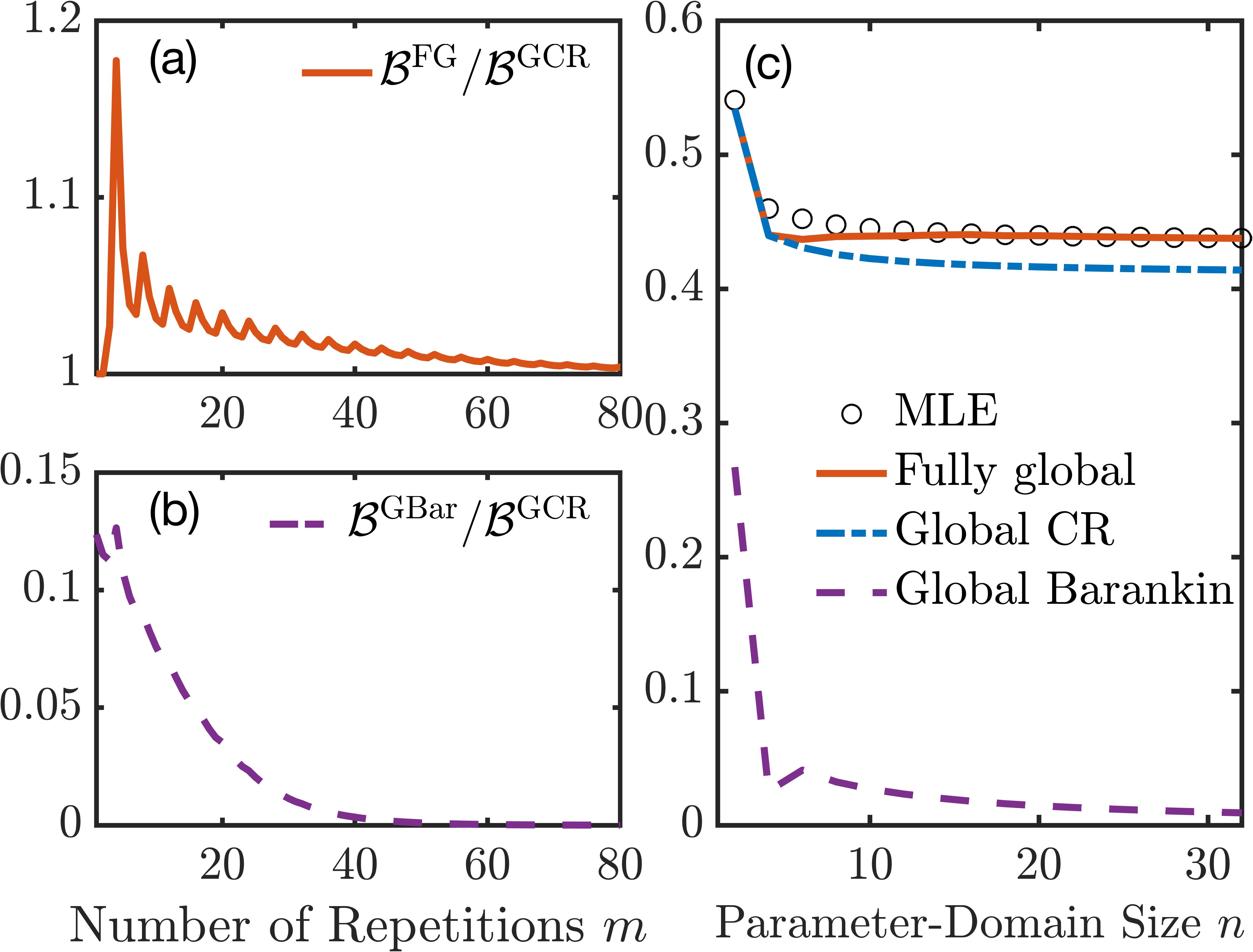}
\caption{
Classical global-score hierarchy for a noisy binary model
\( p(\mathbf{x}|\theta)=\prod_{\ell=1}^{m}(1+x_\ell d\sin\theta)/2 \),
with \( d=1/2 \), \( w_i=1/n \), and \( \theta_i\in[0,\pi] \).
(a) For fixed \( n=8 \), \( \mathcal B_{\rm C}^{\rm FG}/\mathcal B_{\rm C}^{\rm GCR}\to1 \) as \(m\) increases, recovering local Cram\'er--Rao behavior.
(b) The global Barankin level remains below the Cram\'er--Rao scale.
(c) For fixed \( m=10 \), refining the sampled domain drives the variance of the maximum-likelihood estimator (MLE) toward \( \mathcal B_{\rm C}^{\rm FG} \), whereas the global Cram\'er--Rao and Barankin levels do not track the attainable variance.
}
\label{fig:classical_bounds}
\end{figure}

\textit{Example: two-outcome measurements}.---
We illustrate the two regimes with a binary model and local scores
\( g_j(x)=\partial_\theta p(x|\theta_j) \).
For \( x=\pm1 \), let
\( p(x|\theta)=[1+xf(\theta)]/2 \),
\( f_i=f(\theta_i) \),
\( f_i'=\partial_\theta f(\theta_i) \), and
\( \mathbf f'=(f_1',\ldots,f_n')^\top \).
For a single shot, we obtain $\mathcal C^{(i)}(1)
=
\frac{\mathbf f'\mathbf f'^\top}{1-f_i^2}$.
As shown in the SM, the bias vectors lie in
\( \operatorname{Range}\mathcal C^{(i)}(1) \), giving
\( B=\alpha(\mathbf f',\ldots,\mathbf f') \) for some scalar \( \alpha \).
The hierarchy reduces to
\begin{align}
&\mathcal B_{\rm C}^{\rm FG}
=
\mathcal B_{\rm C}^{\rm GCR}
=
\alpha^2\sum_i w_i(1-f_i^2)\nonumber\\
&\geq 
\mathcal B_{\rm C}^{\rm GBar}=
\alpha^2/[\sum_i w_i/(1-f_i^2)] .
\end{align}
Thus, for a single-shot binary model, the fully global and global Cramér--Rao levels coincide, while the global Barankin level is weaker.

For \( m \) independent repetitions,
\( p(\mathbf x|\theta)=\prod_{\ell=1}^m p(x_\ell|\theta) \), one finds
\begin{equation}\label{eq:CIM_two_measurement}
[\mathcal C^{(i)}(m)]_{jk}
=
m\,h(m,\Delta_{i;jk})[\mathcal C^{(i)}(1)]_{jk},
\end{equation}
where
\( h(m,x)=(1+x)^{m-2}(1+mx) \) and
\( \Delta_{i;jk}=(f_j-f_i)(f_k-f_i)/(1-f_i^2) \).
Figure~\ref{fig:classical_bounds} shows the noisy qubit model
\( f(\theta)=d\sin\theta \), with \( d=1/2 \) and uniform weights.
For fixed domain size $n$, $ \mathcal B_{\rm C}^{\rm FG}$ tends to 
$\mathcal B_{\rm C}^{\rm GCR}$ as \(m\) increases, consistent with Observation~1.
For fixed finite \(m\), increasing the number of sampled parameter values, $n$, enriches the global score family and drives the maximum-likelihood variance toward \( \mathcal B_{\rm C}^{\rm FG} \), illustrating Observation~2.
However, the global Cramér--Rao level does not track the attainable variance in this regime.
This example, therefore, shows both limiting behaviors of the hierarchy, namely local Cramér--Rao behavior after likelihood concentration and genuinely global behavior for finite data over broad domains.

\textit{Quantum extension}.--
The hierarchy becomes quantum when \(p(x|\theta)=\tr[E_x\rho(\theta)]\) for a positive operator-valued measure (POVM)  \( \{E_x\} \) and a quantum state $\rho(\theta)$.
We choose \( g_k(x)=\tr(E_xG_k) \), where \(G_k\) is the operator score direction associated with \( \theta_k \).
Optimizing over POVMs gives
\begin{align}
\mathcal B_{\rm Q}^{\rm GCR}
&=
\sum_i
w_i
\frac{\big(b_i^{(i)}\big)^2}{[\mathcal Q^{(i)}]_{ii}},
\\
\mathcal B_{\rm Q}^{\rm GBar}
&=
\mathbf b_w^\top \mathcal Q_w^+ \mathbf b_w,
\\
\mathcal B_{\rm Q}^{\rm FG}
&=
\sum_i w_i\,
\mathbf b^{(i)\top}
\big(\mathcal Q^{(i)}\big)^+
\mathbf b^{(i)} ,
\end{align}
where $[\mathcal Q^{(i)}]_{jk}
=
\tr\!\big[
G_j\,\Omega_{\rho(\theta_i)}(G_k)
\big]$ and \( \mathcal Q_w=\sum_i w_i\mathcal Q^{(i)} \).
The map \( \Omega_{\rho}(G) \) is the generalized symmetric logarithmic derivative (SLD) on the support of \( \rho \), defined by 
\begin{equation}
G=\frac{1}{2}\{\rho,\Omega_\rho(G)\}.    
\end{equation}
Technical conditions associated with non-full-rank states are discussed in the SM.

The quantum task is not only to evaluate the quantum information matrix \( \mathcal Q^{(i)} \), but to find a measurement realizing the relevant hierarchy level.
In local metrology, the optimal measurement is tied to the SLD at the true parameter value.
In the global problem, the score field gives generalized SLDs \(L_i\), one for each \( \theta_i \),
\begin{equation}
L_i
=
\Omega_{\rho(\theta_i)}(G_{A;i}),
\quad
G_{A;i}=\sum_j A_{ji}G_j .
\end{equation}
For generic \(A\), these operators need not coincide, commute, or share eigenvectors.
Thus, global quantum metrology contains a measurement-compatibility problem absent from the classical hierarchy.
The fully global question is whether one measurement, fixed before the true value is known, can saturate a variance functional over the whole domain.

At the unrestricted level, this compatibility problem has an explicit solution for a common-bias class.
Let \( \mathbf b^{(i)}=\mathbf b \) and $\mathcal Y=\operatorname{span}\{G_j\}_{j=1}^n$.
Assume that
\begin{equation}\label{eq:identity_extended_main}
\Omega_{\rho(\theta_i)}(\mathcal Y)
\subseteq
\mathcal Y\oplus \mathbb R I
\end{equation}
for all active \( \theta_i \).
This condition allows only scalar components outside the score span, which shift eigenvalues without rotating eigenvectors.
Then the optimized score directions obey
\begin{equation}
\Omega_{\rho(\theta_i)}(G_{A;i})
=
M_{\rm FG}+c_iI ,
\end{equation}
with \(c_i\in\mathbb R\) and
\begin{equation}\label{eq:obs_3}
M_{\rm FG}
=
\sum_{j=1}^n (T^+\mathbf b)_j\,G_j ,
\qquad
T_{jk}=\tr(G_jG_k).
\end{equation}
Hence all optimized generalized SLDs share the eigenprojectors of \(M_{\rm FG}\).
Writing
\begin{equation}
M_{\rm FG}=\sum_x p_xE_x^{\rm opt},
\end{equation}
the projectors \(E_x^{\rm opt}\) define a single POVM satisfying
\begin{equation}
\mathcal B_{\rm Q}^{\rm FG}
=
\mathcal B_{\rm C}^{\rm FG}(\{E_x^{\rm opt}\}) .
\end{equation}
For \(G_j=\partial_\theta\rho(\theta_j)\), this gives an explicit global measurement determined by the full family of states. The proof is given in the SM.
The fully global measurement is therefore selected by score correlations across the parameter domain, not by the local QFI at one point.

As a minimal illustration, consider
\(\rho_\theta=(1+\vect r_\theta\cdot\vect\sigma)/2\), with
\(\vect r_\theta=(d\cos\theta,d\sin\theta,0)\),
\(\Theta=\{0,\pi/4\}\), and \(\vect b=(1,1/\sqrt2)\).
The compatibility condition is satisfied, and Eq.~\eqref{eq:obs_3} selects
\(E_x^{\rm opt}=(1+x\sigma_y)/2\), \(x=\pm1\).
This parameter-independent measurement saturates
\begin{equation}
\mathcal B_{\rm Q}^{\rm FG}
=
\mathcal B_{\rm C}^{\rm FG}
=
\frac{w_1}{d^2}
+
\frac{w_2(2-d^2)}{2d^2}.
\end{equation}
For \(w_2\neq0\) and $d\neq 1$, it does not saturate the global quantum Cramér--Rao or global quantum Barankin levels.
This example separates three design principles. Local-QFI and reference-value Barankin measurements are tied to a selected parameter value, whereas \(E_x^{\rm opt}\) is fixed before localization and saturates the weighted-domain bound.

\begin{figure}[t]
\centering
\includegraphics[width=0.8\linewidth]{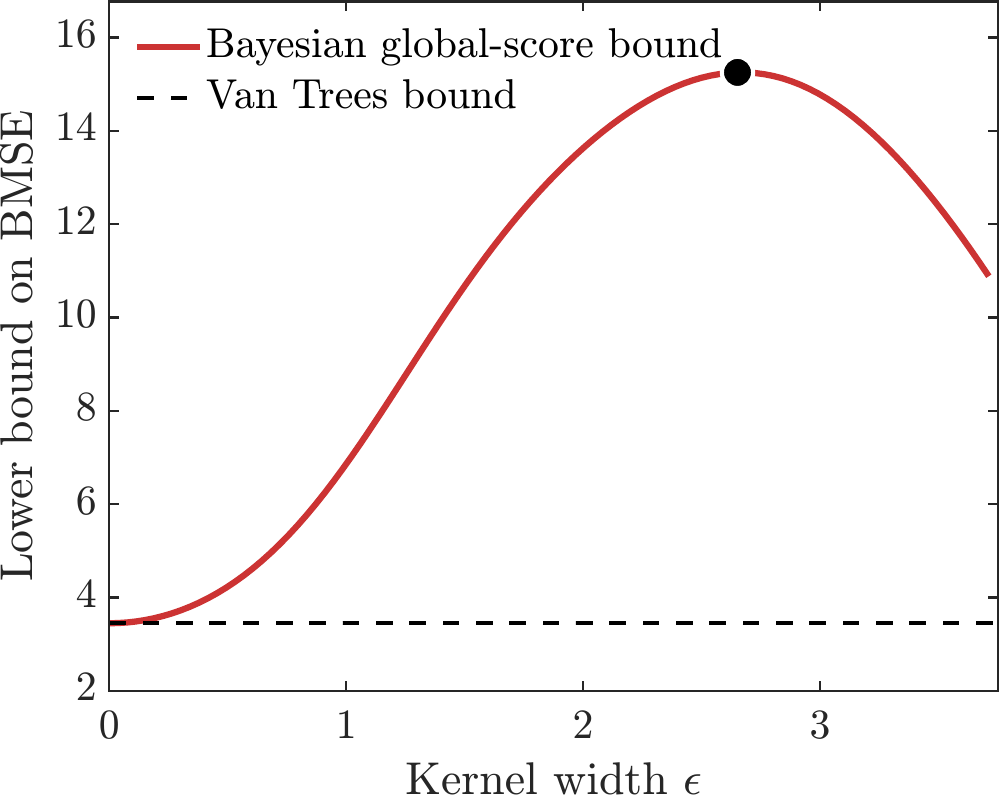}
\caption{
Bayesian global-score bound for a noisy qubit with Gaussian prior width \( \sigma_p=5 \) and decoherence \( d=0.5 \).
The local limit \( \epsilon\to0 \) recovers the Van Trees bound.
Finite-width kernels retain correlations between distinct parameter values and yield stronger BMSE lower bounds.
The black dot marks the optimal kernel width yielding the strongest BMSE lower bound.
}
\label{fig2}
\end{figure}

\textit{Bayesian extension}.---
The same score principle extends to Bayesian estimation.
For a continuous parameter with prior \( p(\theta) \), the figure of merit is the Bayesian mean-square error
\begin{equation}
\mathrm{BMSE}(\hat\theta)
=
\int d\theta\,p(\theta)\,
\mathbb E_\theta\!\left[(\hat\theta-\theta)^2\right].
\end{equation}
Although this criterion is global, the usual Van Trees bound is generated by a local score.
To retain finite correlations between distinct parameter values, we introduce a hierarchy kernel \( A_\epsilon(\theta,\theta') \) and define
\begin{equation}
S_{\rm B}(x,\theta)
=
\frac{
\int d\theta'\,
A_\epsilon(\theta,\theta')\,
\partial_{\theta'}p(x,\theta')
}
{p(x,\theta)},
\end{equation}
where \( p(x,\theta)=p(x|\theta)p(\theta) \).
Cauchy--Schwarz with respect to \( p(x,\theta) \), followed by optimization over quantum measurements, gives the Bayesian global-score bound
\begin{equation}\label{g_BY}
\mathrm{BMSE}(\hat\theta)
\ge
\frac{1}{
\iiint d\theta\,d\theta'\,d\theta''
A_\epsilon(\theta,\theta')
\mathcal Q_{\rm B}^{(\theta)}(\theta',\theta'')
A_\epsilon(\theta,\theta'')
},
\end{equation}
with
\begin{equation}
\begin{aligned}
\mathcal Q_{\rm B}^{(\theta)}(\theta',\theta'')
&=
\tr\!\left[
\partial_{\theta'}\rho[\theta']\,
\Omega_{\rho[\theta]}\!\left(\partial_{\theta''}\rho[\theta'']\right)
\right],
\\
\rho[\theta]&=p(\theta)\rho(\theta).
\end{aligned}
\end{equation}
For \( A_\epsilon(\theta,\theta')=\epsilon^{-1}K[(\theta-\theta')/\epsilon] \to \delta(\theta-\theta') \), Eq.~\eqref{g_BY} reduces to the Van Trees bound in the local limit \( \epsilon\to0 \)~\cite{van2004detection}.
At finite width, the kernel retains nonlocal score correlations and can strengthen the BMSE lower bound, as shown in Fig.~\ref{fig2}.


\textit{Conclusion}.--
We have developed a global-score framework for quantum metrology beyond the local Cramér--Rao regime.
The central object is a score field over the parameter domain, tied to a variance criterion defined before the unknown parameter is localized.
By allowing score components at different parameter values to couple through a hierarchy matrix, the construction yields a hierarchy of precision bounds for prescribed weighted variances over the domain.
The weighted global Cramér--Rao and global Barankin bounds appear as restricted levels, while the unrestricted level gives a fully global bound governed by correlations across the full parameter domain.

This hierarchy separates local and genuinely global estimation.
For many independent repetitions at fixed domain size, likelihood concentration suppresses the nonlocal score correlations and the fully global bound reduces to the Cramér--Rao structure, recovering the usual local theory.
For finite data over broad domains, by contrast, estimator fluctuations cannot be controlled by local Fisher information alone.
When the compatibility conditions identified above are satisfied, the attainable precision is governed by global score correlations between distinct parameter values.

In the quantum setting, this leads to a measurement principle beyond local-QFI optimization.
For a compatible common-bias class, the fully global bound is saturated by a single measurement fixed over the whole parameter domain and determined by the full-domain score structure, rather than by the symmetric logarithmic derivative at one parameter value.
Thus, the quantum problem is not only to bound distinguishability, but to identify when global score correlations can be made jointly measurable by a single POVM.
The same construction extends to Bayesian estimation, where the Van Trees bound is recovered as the local-kernel limit and finite-width score correlations can yield stronger BMSE lower bounds.
This pre-localization regime is the natural operating regime of finite-data quantum sensors used for broad-range phase and frequency calibration, thermometry, and sensing tasks in which adaptive narrowing has not yet been completed.

Generalized global scores therefore provide a unified route to precision limits in classical and quantum metrology beyond the local regime, in both frequentist and Bayesian settings.

\begin{acknowledgments}
H.L.S. acknowledges support from the European Commission through the H2020 QuantERA ERA-NET Cofund project ``MENTA'' and from the Horizon Europe programme HORIZON-CL4-2022-QUANTUM-02-SGA under Grant Agreement No. 101113690 (PASQuanS2.1). A.S. acknowledges support from the National Natural Science Foundation of China under Grant No. W2531008 and from the Peacock Plan.

\end{acknowledgments}

\clearpage
\onecolumngrid

\section*{Supplemental Material for ``Global Bounds beyond Local Quantum Metrology''}
\addcontentsline{toc}{section}{Supplemental Material for ``Global Bounds beyond Local Quantum Metrology''}

\setcounter{section}{0}
\setcounter{subsection}{0}
\setcounter{equation}{0}
\setcounter{figure}{0}
\setcounter{table}{0}
\renewcommand{\thesection}{S\arabic{section}}
\renewcommand{\thesubsection}{S\arabic{section}.\arabic{subsection}}
\renewcommand{\theequation}{S\arabic{equation}}
\renewcommand{\thefigure}{S\arabic{figure}}
\renewcommand{\thetable}{S\arabic{table}}


\section{Derivation of Classical Bounds}
\label{sec:SM_derivation_classical_bounds}

In this section, we derive the classical bounds presented in the main text. Our strategy is first to construct a family of bounds using generalized global score functions, and then optimize these bounds under different constraints imposed on the hierarchy matrix $A$.

\subsection{Classical Bounds for Arbitrary Test Functions}
\label{sec:SM_arbitrary_test_functions}

Consider a single parameter $\theta$ taking values in the set $\Theta=\{\theta_1,\dots,\theta_n\}$. Let $\hat\theta(x)$ be an estimator constructed from measurement outcomes $x\in X$.
Throughout this Supplemental Material, we assume the regularity condition $p(x|\theta)>0$ for all outcomes $x$ and all parameter values $\theta\in\Theta$.
To assess the performance of the estimator across all parameter values, we define its weighted variance as
\begin{equation}\label{eq:weighted_var_SM}
\overline{\var}(\hat\theta)=\sum_{i=1}^n w_i\,\var_{\theta_i}(\hat\theta),
\end{equation}
where $w_i\ge 0$ are weights that reflect the relative importance of each parameter value, and
\begin{equation}
\var_{\theta_i}(\hat\theta)
=
\sum_{x\in X} p(x|\theta_i)\big[\hat\theta(x)-\mathbb E_{\theta_i}(\hat\theta)\big]^2,
\end{equation}
is the variance of the estimator at a specific parameter value $\theta_i$ with $\mathbb E_{\theta_i}(\hat \theta)=\sum_x p(x|\theta_i)\hat\theta(x)$.
The goal is to establish fundamental lower bounds for the weighted variance~\eqref{eq:weighted_var_SM} for any estimator.

For each parameter value $\theta_j$, we choose a test function $g_j(x)$ and collect them into the vector $\vect g(x)=(g_1(x),\dots,g_n(x))^\top$.
Using these test functions, we introduce the hierarchy matrix $A$ to define a generalized global score function,
\begin{equation}\label{eq:SM_score_func}
S_i(x)=\frac{\vect a_i^\top\vect g(x)}{p(x|\theta_i)},\qquad i=1,\dots,n,
\end{equation}
where $A$ is a real $n\times n$ matrix and $\vect a_i=(A_{1i},\dots,A_{ni})$ is the $i$-th column of the real hierarchy matrix $A$.
Define the weighted inner product
\begin{equation}
\wex{\vect f}{\vect h}\equiv \sum_{i=1}^n w_i\,\mathbb E_{\theta_i}[f_i(x)h_i(x)].
\end{equation}
Applying the Cauchy--Schwarz inequality to $\Delta_i\hat\theta(x)\equiv \hat\theta(x)-\mathbb E_{\theta_i}[\hat\theta(x)]$ and $S_i(x)$ yields
\begin{equation}\label{eq:SM_CS_ineq}
\wex{\Delta\hat{\vect\theta}}{\vect S}^2
\le
\wex{\Delta\hat{\vect\theta}}{\Delta\hat{\vect\theta}}\,
\wex{\vect S}{\vect S}.
\end{equation}

Using the definition of the generalized global score function~\eqref{eq:SM_score_func}, the left-hand side becomes
\begin{equation}\label{eq:SM_lhs_eval}
\wex{\Delta\hat{\vect\theta}}{\vect S}
=
\sum_{i=1}^n w_i \sum_{x\in X} p(x|\theta_i)\Delta_i\hat\theta(x)S_i(x)=
\sum_{i,j=1}^n
\left[
w_i A_{ji}
\left(
\sum_{x\in X}\Delta_i\hat\theta(x)g_j(x)
\right)
\right]
=
\sum_{i=1}^n w_i \vect a_i^\top\vect b^{(i)},
\end{equation}
where $\vect b^{(i)}$ is the $i$-column of the bias matrix defined by
\begin{equation}\label{eq:SM_bias_matrix}
\vect b^{(i)} \equiv 
\sum_{x\in X} \vect g(x)\,\Delta_i\hat\theta(x)=\sum_{x\in X} \vect g(x)\left[\hat\theta(x)-\mathbb E_{\theta_i}(\hat\theta)\right]
\end{equation}
The first factor on the right-hand side of Eq.~\eqref{eq:SM_CS_ineq} evaluates to the weighted variance,
\begin{equation}\label{eq:SM_rhs_first_eval}
\wex{\Delta\hat{\vect\theta}}{\Delta\hat{\vect\theta}}
=
\sum_{i=1}^n w_i\,\mathbb E_{\theta_i}\!\left[\Delta_i\hat\theta(x)^2\right]
=
\overline{\var}(\hat\theta).
\end{equation}
The second factor becomes
\begin{align}\label{eq:SM_rhs_second_eval}
\wex{\vect S}{\vect S}
&=
\sum_{i=1}^n w_i \sum_{x\in X} p(x|\theta_i)\,[S_i(x)]^2=
\sum_{i=1}^n w_i \sum_{x\in X}
\frac{\left[\sum_{j=1}^n A_{ji}g_j(x)\right]^2}{p(x|\theta_i)}
=
\sum_{i=1}^n w_i \vect a_i^\top\mathcal C^{(i)}\vect a_i,
\end{align}
where $\mathcal C^{(i)}$ is the classical information matrix with entries
\begin{equation}\label{eq:SM_CIM}
[\mathcal C^{(i)}]_{jk}
=
\sum_{x\in X}\frac{g_j(x)g_k(x)}{p(x|\theta_i)}.
\end{equation}

Substituting Eqs.~\eqref{eq:SM_lhs_eval}, \eqref{eq:SM_rhs_first_eval}, and \eqref{eq:SM_rhs_second_eval} into the Cauchy--Schwarz inequality~\eqref{eq:SM_CS_ineq} yields a general bound for the weighted variance,
\begin{equation}\label{eq:SM_general_bound}
\overline{\var}(\hat\theta)
\ge
\frac{
\left(\sum_i w_i\,\mathbf a_i^\top\mathbf b^{(i)}\right)^2
}{
\sum_i w_i\,\mathbf a_i^\top\mathcal C^{(i)}\mathbf a_i
}
\equiv
\mathcal B_{\rm C}(A),
\end{equation}
which is the general global-score bound obtained in the main text.
This bound holds for any choice of test functions $\vect g(x)$ and any hierarchy matrix $A$ for which the denominator is nonzero.
The specific bounds presented in the main text are obtained by optimizing $\BC(A)$ under different structural constraints on $A$.

\subsection{Optimization over the Hierarchy Matrix}
\label{SM_Optimizatio_A}

The bound $\BC(A)$ in Eq.~\eqref{eq:SM_general_bound} depends on the hierarchy matrix $A$.
To obtain the strongest bound, we optimize over $A$.
Before performing this optimization, we introduce the following mathematical result.

\begin{theorem}[Rayleigh Quotient Maximization]
\label{thm:rayleigh_max}
Let $\vect\mu,\vect x\in\mathbb R^k$, and let $M$ be a $k\times k$ real symmetric positive semidefinite matrix.
Define the generalized Rayleigh quotient
\begin{equation}\label{SM:Rayleigh_quotient}
R(\vect x;\vect\mu,M)=\frac{(\vect\mu^\top\vect x)^2}{\vect x^\top M\vect x}.
\end{equation}
The maximum over $\vect x\neq 0$ is
\begin{equation}\label{SM:Rayleigh_quotient_max}
R_{\max}(\vect\mu,M)
=
\max_{\vect x\in\mathbb R^k\setminus\{0\}}R(\vect x;\vect\mu,M)
=
\begin{cases}
\vect\mu^\top M^+\vect\mu, & \text{if } \vect\mu\in\operatorname{Range}(M),\\
+\infty, & \text{if } \vect\mu\notin\operatorname{Range}(M),
\end{cases}
\end{equation}
where $M^+$ denotes the Moore--Penrose pseudoinverse and
$\operatorname{Range}(M)=\{M\vect y\mid \vect y\in\mathbb R^k\}$ is the range of $M$.
Moreover, the maximum is achieved at $\vect x=\lambda M^+\vect\mu+\vect z$ with $\lambda\in\mathbb R$ and $\vect z\in\ker(M)$.
\end{theorem}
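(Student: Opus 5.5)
We split according to whether $\vect\mu$ lies in $\operatorname{Range}(M)$, and use the orthogonal decomposition $\mathbb R^k=\operatorname{Range}(M)\oplus\ker(M)$, which holds because $M$ is symmetric. A generic direction is written $\vect x=\vect u+\vect z$ with $\vect u\in\operatorname{Range}(M)$ and $\vect z\in\ker(M)$. Then $\vect x^\top M\vect x=\vect u^\top M\vect u$.

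\emph{Case $\vect\mu\notin\operatorname{Range}(M)$.} Decompose $\vect\mu=\vect\mu_R+\vect\mu_K$ with $\vect\mu_K\in\ker(M)$ and $\vect\mu_K\neq0$. Choosing $\vect x=\vect\mu_K$ gives numerator $(\vect\mu^\top\vect\mu_K)^2=\|\vect\mu_K\|^4>0$ and denominator $0$. To make sense of ``$+\infty$'' with a nonzero denominator, perturb to $\vect x=\vect\mu_K+t\vect v$ with $\vect v\notin\ker M$. As $t\to0$, the denominator $t^2\vect v^\top M\vect v\to0$ while the numerator tends to $\|\vect\mu_K\|^4$, so the supremum is $+\infty$. (If $M=0$ we use the convention that a positive numerator over zero denominator is $+\infty$.)

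\emph{Case $\vect\mu\in\operatorname{Range}(M)$.} Write $\vect\mu=M\vect y$. Then $\vect\mu^\top\vect x=\vect y^\top M\vect x=\vect y^\top M\vect u$, so the kernel component $\vect z$ affects neither the numerator nor the denominator. This is why the maximizer is only determined up to $\ker(M)$. On $\operatorname{Range}(M)$ we apply the Cauchy--Schwarz inequality in the semi-inner product $\langle\vect p,\vect q\rangle_M=\vect p^\top M\vect q$, using the vectors $M^+\vect\mu$ and $\vect x$. Since $MM^+$ is the orthogonal projector onto $\operatorname{Range}(M)$, we have $MM^+\vect\mu=\vect\mu$, hence $\langle M^+\vect\mu,\vect x\rangle_M=\vect\mu^\top\vect x$. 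Cauchy--Schwarz then gives
\begin{equation*}
(\vect\mu^\top\vect x)^2\le(\vect\mu^\top M^+MM^+\vect\mu)(\vect x^\top M\vect x)=(\vect\mu^\top M^+\vect\mu)(\vect x^\top M\vect x),
\end{equation*}
where the last step uses $M^+MM^+=M^+$. This yields $R\le\vect\mu^\top M^+\vect\mu$ whenever $\vect x^\top M\vect x>0$. If $\vect x^\top M\vect x=0$, then $\vect x\in\ker M$ and the numerator also vanishes; such $\vect x$ are excluded or treated as $0/0$.

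\emph{Attainment and characterization.} Plugging in $\vect x=\lambda M^+\vect\mu+\vect z$ with $\lambda\neq0$ gives equality by direct substitution. Conversely, equality in Cauchy--Schwarz requires $M^{1/2}\vect x\parallel M^{1/2}M^+\vect\mu$, i.e.\ $\vect u=\lambda M^+\vect\mu$, which yields exactly the stated family.

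The main obstacle is not the algebra but stating the degenerate situations carefully. These are the meaning of the maximum when the denominator can vanish, the case $\vect\mu=0$ (where the maximum is $0$ and $\lambda$ is arbitrary), and the requirement $\lambda\neq0$ when $\vect\mu\neq0$. I would handle all of them by defining the quotient only on $\{\vect x: \vect x^\top M\vect x>0\}$ and noting these exceptions explicitly.
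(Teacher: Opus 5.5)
Your proof is correct, but in the main case it takes a different route from the paper. For $\vect\mu\in\operatorname{Range}(M)$, the paper fixes the normalization $\vect x^\top M\vect x=1$ and introduces a Lagrange multiplier. It solves the stationarity condition $\vect\mu=2\lambda M\vect x$ with the pseudoinverse, $\vect x=(2\lambda)^{-1}M^+\vect\mu+\vect v$ with $\vect v\in\ker(M)$, and then reads off the value $\vect\mu^\top M^+\vect\mu$ from the constraint. You instead apply Cauchy--Schwarz in the semi-inner product $\vect p^\top M\vect q$ to $M^+\vect\mu$ and $\vect x$, using $MM^+\vect\mu=\vect\mu$ and $M^+MM^+=M^+$.

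Your route gives a genuine global bound $R\le\vect\mu^\top M^+\vect\mu$ for every admissible $\vect x$. Its equality case also gives the converse: every maximizer has the form $\lambda M^+\vect\mu+\vect z$ with $\lambda\neq0$. The Lagrange argument only locates stationary points. It tacitly assumes that a maximum exists and is attained at one of them, which is not automatic because the constraint set is unbounded when $M$ is singular.

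Your treatment of $\vect\mu\notin\operatorname{Range}(M)$ is also sharper. The paper picks $\vect x\in\ker(M)$ with $\vect\mu^\top\vect x\neq0$ and says the quotient becomes arbitrarily large ``by rescaling.'' But $R$ is homogeneous of degree zero and its denominator vanishes at such points, so rescaling does nothing. Your family $\vect x=\vect\mu_K+t\vect v$ gives admissible points with finite, unbounded quotients, which is exactly what the claim $\sup R=+\infty$ requires.

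The paper's derivation is shorter and mirrors the first-order conditions it later reuses for the optimal hierarchy columns. Your version is more rigorous and correctly flags the degenerate cases $\vect\mu=0$, $\lambda\neq0$, and $M=0$.
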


\begin{proof}
We distinguish two cases.

\emph{Case 1:} $\vect\mu\notin\operatorname{Range}(M)$.
Since $M$ is symmetric, $\operatorname{Range}(M)=(\ker M)^\perp$.
Thus $\vect\mu\notin\operatorname{Range}(M)$ implies that $\vect\mu$ has a nonzero component in $\ker(M)$.
Choose $\vect x\in\ker(M)$ such that $\vect\mu^\top\vect x\neq 0$.
Then $\vect x^\top M\vect x=0$ while $(\vect\mu^\top\vect x)^2>0$, so the denominator of Eq.~\eqref{SM:Rayleigh_quotient} vanishes.
By rescaling $\vect x$, the quotient can be made arbitrarily large, and hence the supremum is $+\infty$.

\emph{Case 2:} $\vect\mu\in\operatorname{Range}(M)$.
In this case the Rayleigh quotient is finite.
Since the quotient is homogeneous in $\vect x$, we may impose the normalization constraint
\begin{equation}
\vect x^\top M\vect x=1,
\end{equation}
and maximize $(\vect\mu^\top\vect x)^2$ under this constraint.

Consider the Lagrangian
\begin{equation}
\mathcal L(\vect x,\lambda)=\vect\mu^\top\vect x-\lambda(\vect x^\top M\vect x-1),
\end{equation}
where maximizing $\vect\mu^\top\vect x$ is sufficient because the square in Eq.~\eqref{SM:Rayleigh_quotient} removes the sign.
The stationary condition $\nabla_{\vect x}\mathcal L=0$ yields
\begin{equation}
\vect\mu=2\lambda M\vect x.
\end{equation}
Applying the pseudoinverse gives
\begin{equation}
\vect x=\frac{1}{2\lambda}M^+\vect\mu+\vect v,\qquad \vect v\in\ker(M).
\end{equation}
Because $\vect\mu\in\operatorname{Range}(M)=(\ker M)^\perp$, we have $\vect\mu^\top\vect v=0$.
Furthermore, $M\vect v=0$, so $\vect v$ does not affect either the numerator or the constraint.
Hence we can set $\vect v=0$ without loss of generality, giving
\begin{equation}\label{SM:THM1_1}
\vect x=\frac{1}{2\lambda}M^+\vect\mu.
\end{equation}
Substituting into the constraint $\vect x^\top M\vect x=1$ gives
\begin{equation}
\frac{1}{(2\lambda)^2}\vect\mu^\top M^+MM^+\vect\mu
=
\frac{1}{(2\lambda)^2}\vect\mu^\top M^+\vect\mu
=1,
\end{equation}
where we used $M^+MM^+=M^+$.
Therefore,
\begin{equation}\label{SM:THM1_2}
2\lambda=\pm\sqrt{\vect\mu^\top M^+\vect\mu}.
\end{equation}
Substituting Eqs.~\eqref{SM:THM1_1} and \eqref{SM:THM1_2} into Eq.~\eqref{SM:Rayleigh_quotient} yields
\begin{equation}
R_{\max}(\vect\mu,M)=\vect\mu^\top M^+\vect\mu.
\end{equation}
From Eq.~\eqref{SM:THM1_1}, the maximum is achieved for $\vect x=\lambda M^+\vect\mu+\vect z$ with $\lambda\in\mathbb R$ and $\vect z\in\ker(M)$.
\end{proof}

\textbf{Diagonal hierarchy matrix}.--
For a diagonal hierarchy matrix
\begin{equation}
A_{\rm diag}=\operatorname{diag}(a_1,\dots,a_n),
\end{equation}
Eq.~\eqref{eq:SM_general_bound} simplifies to
\begin{equation}
\BC(A_{\rm diag})
=
\frac{\left(\sum_{i=1}^n w_i a_i  b_{i}^{(i)}\right)^2}
{\sum_{i=1}^n w_i a_i^2[\mathcal C^{(i)}]_{ii}}.
\end{equation}
This corresponds to the Rayleigh quotient~\eqref{SM:Rayleigh_quotient} with
\begin{equation}
\vect x=(a_1,\dots,a_n)^\top,\qquad
\vect\mu=(w_1 b_{1}^{(1)},\dots,w_n  b_{n}^{(n)})^\top,
\end{equation}
and
\begin{equation}
M=\operatorname{diag}\!\bigl(w_1[\mathcal C^{(1)}]_{11},\dots,w_n[\mathcal C^{(n)}]_{nn}\bigr).
\end{equation}
Since $\vect\mu\in\operatorname{Range}(M)$ holds trivially for this diagonal structure, Theorem~\ref{thm:rayleigh_max} gives
\begin{equation}\label{SM_classical_diagonal_bound}
\BCd
=
\max_{A_{\rm diag}}\BC(A_{\rm diag})
=
\vect\mu^\top M^+\vect\mu
=
\sum_{i=1}^n w_i\frac{( b_{i}^{(i)})^2}{[\mathcal C^{(i)}]_{ii}}.
\end{equation}
This is the classical bound corresponding to the global Cram\'er--Rao level in the main text.

\textbf{Column-constant hierarchy matrix}.--
For a column-constant hierarchy matrix
\begin{equation}
A_{\rm col}=(\vect a,\dots,\vect a),
\end{equation}
i.e. $\vect a_{i}=\vect a$ for all $i$, Eq.~\eqref{eq:SM_general_bound} simplifies to
\begin{equation}
\BC(A_{\rm col})
=
\frac{(\vect b_w^\top\vect a)^2}{\vect a^\top\mathcal C_w\vect a},
\end{equation}
where
\begin{equation}
\vect b_w=\sum_{i=1}^n w_i\vect b^{(i)},\qquad
\mathcal C_w=\sum_{i=1}^n w_i\mathcal C^{(i)}.
\end{equation}
This is again a Rayleigh quotient with
\begin{equation}
\vect x=\vect a,\qquad \vect\mu=\vect b_w,\qquad M=\mathcal C_w.
\end{equation}
The condition $\vect\mu\in\operatorname{Range}(M)$ required for a finite maximum is ensured if $\vect b^{(i)}\in\operatorname{Range}(\mathcal C^{(i)})$ for all $i$, which will be proven in the next subsection.
Assuming this condition holds, Theorem~\ref{thm:rayleigh_max} gives
\begin{equation}
\BCc
=
\max_{A_{\rm col}}\BC(A_{\rm col})
=
\vect b_w^\top \mathcal C_w^+ \vect b_w.
\end{equation}
This is the classical bound corresponding to the global Barankin level in the main text.

\textbf{General hierarchy matrix}.--
For the fully unrestricted case, we impose no constraint on the hierarchy matrix $A$.
To express Eq.~\eqref{eq:SM_general_bound} as a Rayleigh quotient, we vectorize all columns of $A$.
Let
\begin{equation}
\vect x=(\vect a_{1}^{\top},\vect a_2^{\top},\dots,\vect a_n^{\top})^\top\in\mathbb R^{n^2}
\end{equation}
be the concatenation of all column vectors.
Similarly, define
\begin{equation}
\vect\mu=(w_1\vect b^{(1)\top},w_2\vect b^{(2)\top},\dots,w_n\vect b^{(n)\top})^\top\in\mathbb R^{n^2}.
\end{equation}
The denominator in Eq.~\eqref{eq:SM_general_bound} becomes a quadratic form $\vect x^\top M\vect x$, where $M$ is the block-diagonal matrix
\begin{equation}
M=\bigoplus_{i=1}^n\bigl(w_i\mathcal C^{(i)}\bigr)\in\mathbb R^{n^2\times n^2}.
\end{equation}
With these definitions,
\begin{equation}
\BC(A)=\frac{(\vect\mu^\top\vect x)^2}{\vect x^\top M\vect x},
\end{equation}
which is precisely the Rayleigh quotient in Theorem~\ref{thm:rayleigh_max}.
The condition for a finite maximum, $\vect\mu\in\operatorname{Range}(M)$, is equivalent to requiring $\vect b^{(i)}\in\operatorname{Range}(\mathcal C^{(i)})$ for each $i$.
Assuming this holds, Theorem~\ref{thm:rayleigh_max} yields
\begin{equation}\label{SM_classical_gen_bound}
\BCg
=
\max_A \BC(A)
=
\vect\mu^\top M^+\vect\mu
=
\sum_{i=1}^n w_i\,\vect b^{(i)\top}(\mathcal C^{(i)})^+\vect b^{(i)}.
\end{equation}
This is the classical bound corresponding to the fully global level in the main text.

\subsection{Automatic Finite Condition}
\label{SM_finite_condition}

We first prove the finite condition
\begin{equation}\label{eq:SM_finite_condition}
\vect b^{(i)}\in \operatorname{Range}(\mathcal C^{(i)}),\qquad \forall i=1,\dots,n,
\end{equation}
which ensures that the classical bounds remain finite, i.e., that the corresponding Rayleigh quotient takes a finite value.
We emphasize that this condition is always satisfied.

For each $i$, define the subspace
\begin{equation}
\mathcal G\equiv \operatorname{Span}\{\vect g(x)\mid x\in X\}\subseteq \mathbb R^n.
\end{equation}
According to the definition of the bias vector in Eq.~\eqref{eq:SM_bias_matrix}, we have
\begin{equation}
\vect b^{(i)}=\sum_{x\in X}\vect g(x)\,\Delta\hat\theta_i(x),
\end{equation}
which is a linear combination of the vectors $\vect g(x)$.
Hence $\vect b^{(i)}\in\mathcal G$ for any estimator.
Thus, to establish the finite condition, it suffices to show that
\begin{equation}\label{eq:SM_im_C}
\operatorname{Range}(\mathcal C^{(i)})=\mathcal G.
\end{equation}

We prove the two inclusions separately.
First, let $\vect u\in \operatorname{Range}(\mathcal C^{(i)})$.
Then there exists $\vect v\in\mathbb R^n$ such that $\vect u=\mathcal C^{(i)}\vect v$.
Substituting the definition of the classical information matrix from Eq.~\eqref{eq:SM_CIM}, we obtain
\begin{equation}
\vect u
=
\mathcal C^{(i)}\vect v
=
\sum_{x\in X}\frac{\vect g(x)\,[\vect g(x)^\top\vect v]}{p(x|\theta_i)},
\end{equation}
which is a linear combination of the vectors $\vect g(x)$.
Therefore $\vect u\in\mathcal G$, establishing $\operatorname{Range}(\mathcal C^{(i)})\subseteq \mathcal G$.

For the reverse inclusion, we use orthogonal complements.
Let $\vect\mu\in (\operatorname{Range}\mathcal C^{(i)})^\perp$, meaning that
\begin{equation}
\vect\mu^\top \mathcal C^{(i)} \vect\nu =0,\qquad \forall \vect\nu\in\mathbb R^n.
\end{equation}
In particular, taking $\vect\nu=\vect\mu$ gives
$\vect\mu^\top \mathcal C^{(i)} \vect\mu =0$.
Using Eq.~\eqref{eq:SM_CIM}, this becomes
\begin{equation}
\vect\mu^\top \mathcal C^{(i)} \vect\mu
=
\sum_{x\in X}\frac{[\vect\mu^\top\vect g(x)]^2}{p(x|\theta_i)}
=0.
\end{equation}
Since $p(x|\theta_i)>0$ for all $x\in X$, each term in the sum is nonnegative.
Hence every term must vanish, which implies
\begin{equation}
\vect\mu^\top \vect g(x)=0,\qquad \forall x\in X.
\end{equation}
Thus $\vect\mu$ is orthogonal to every $\vect g(x)$, and therefore $\vect\mu\in\mathcal G^\perp$.
We have shown that $(\operatorname{Range}\mathcal C^{(i)})^\perp \subseteq \mathcal G^\perp$.
Taking orthogonal complements on both sides yields $\operatorname{Range}(\mathcal C^{(i)})\supseteq \mathcal G$.

Combining the two inclusions, we conclude that $\operatorname{Range}(\mathcal C^{(i)})=\mathcal G$,
as claimed in Eq.~\eqref{eq:SM_im_C}.
Since $\vect b^{(i)}\in\mathcal G$ for any estimator, it follows that $\vect b^{(i)}\in\operatorname{Range}(\mathcal C^{(i)})$.
Thus, the finite condition~\eqref{eq:SM_finite_condition} holds automatically for all $i$, without any additional assumptions.

\section{Attainability of the Fully Global Classical Bound}

\subsection{Saturation Condition}

We now derive the saturation condition for the classical fully global bound, which is tighter than the diagonal and column-constant bounds.

In deriving the fully global bound, we first apply the Cauchy--Schwarz inequality~\eqref{eq:SM_CS_ineq}, which yields
\begin{equation}\label{eq:SM_saturation_cond_1}
\Delta_i\hat\theta(x)
=
\lambda_i S_i(x)
=
\lambda_i\,\frac{\vect g(x)^\top \vect a_i}{p(x|\theta_i)},
\qquad
\forall i:\; w_i\neq 0,
\end{equation}
for some $\lambda_i\in\mathbb R$.
Optimizing over the hierarchy matrix $A$ with Theorem~\ref{thm:rayleigh_max} then gives
\begin{equation}\label{eq:SM_saturation_cond_2}
\vect a_i
=
\beta_i(\mathcal C^{(i)})^+\vect b^{(i)}+\vect z^{(i)},
\qquad
\forall i:\; w_i\neq 0,
\end{equation}
for some $\beta_i\in\mathbb R$ and $\vect z^{(i)}\in\ker(\mathcal C^{(i)})$.
Combining Eqs.~\eqref{eq:SM_saturation_cond_1} and \eqref{eq:SM_saturation_cond_2}, we obtain
\begin{equation}\label{eq:SM_saturation_cond_comb}
\vect g(x)^\top(\mathcal C^{(i)})^+\vect b^{(i)}
+
\vect g(x)^\top\vect z^{(i)}
=
\alpha_i\,p(x|\theta_i)\,\Delta_i\hat\theta(x),
\qquad
\forall i:\; w_i\neq 0,
\end{equation}
for some $\alpha_i\in\mathbb R$ and $\vect z^{(i)}\in\ker(\mathcal C^{(i)})$.

The constants $\alpha_i$ are fixed by substituting Eq.~\eqref{eq:SM_saturation_cond_comb} into the bias definition~\eqref{eq:SM_bias_matrix}. For each $i$,
\begin{align}
\vect b^{(i)}
&=
\sum_{x\in X}\vect g(x)\,\Delta_i\hat\theta(x)
=
\sum_{x\in X}
\vect g(x)\,
\frac{\vect g(x)^\top\!\left[(\mathcal C^{(i)})^+\vect b^{(i)}+\vect z^{(i)}\right]}
{\alpha_i\,p(x|\theta_i)}
=
\frac{\mathcal C^{(i)}\left[(\mathcal C^{(i)})^+\vect b^{(i)}+\vect z^{(i)}\right]}{\alpha_i}.
\end{align}
Since $\vect b^{(i)}\in\operatorname{Range}(\mathcal C^{(i)})$ by Sec.~\ref{SM_finite_condition} and $\vect z^{(i)}\in\ker(\mathcal C^{(i)})$, we have $\mathcal C^{(i)}(\mathcal C^{(i)})^+\vect b^{(i)}=\vect b^{(i)}$, hence $\alpha_i=1$.
We thus obtain the saturation condition
\begin{equation}\label{eq:SM_saturation_cond_with_kernel}
\vect g(x)^\top\!\left[(\mathcal C^{(i)})^+\vect b^{(i)}+\vect z^{(i)}\right]
=
p(x|\theta_i)\,\Delta_i\hat\theta(x),
\qquad
\forall i:\; w_i\neq 0,
\end{equation}
for some $\vect z^{(i)}\in\ker(\mathcal C^{(i)})$, where $\Delta_i\hat\theta(x)=\hat\theta(x)-\mathbb E_{\theta_i}(\hat\theta)$.

The kernel term does not affect the outcome-space equation. Indeed, if $\vect z^{(i)}\in\ker(\mathcal C^{(i)})$, then
\begin{equation}
0
=
\vect z^{(i)\top}\mathcal C^{(i)}\vect z^{(i)}
=
\sum_{x\in X}
\frac{\left[\vect g(x)^\top\vect z^{(i)}\right]^2}{p(x|\theta_i)}.
\end{equation}
Under the regularity assumption $p(x|\theta_i)>0$ for all $x\in X$, this implies
\begin{equation}
\vect g(x)^\top\vect z^{(i)}=0,
\qquad
\forall x\in X.
\end{equation}
Hence the saturation condition is equivalently
\begin{equation}\label{eq:SM_saturation_cond_final}
\vect g(x)^\top(\mathcal C^{(i)})^+\vect b^{(i)}
=
p(x|\theta_i)\,\Delta_i\hat\theta(x),
\qquad
\forall i:\; w_i\neq0.
\end{equation}

To verify sufficiency, we substitute Eq.~\eqref{eq:SM_saturation_cond_final} into Eq.~\eqref{eq:weighted_var_SM}:
\begin{align}
\overline{\var}(\hat\theta)
&=
\sum_{i=1}^n w_i\sum_{x\in X}p(x|\theta_i)[\Delta_i\hat\theta(x)]^2
=
\sum_{i=1}^n w_i\sum_{x\in X}
\frac{1}{p(x|\theta_i)}
\left[\vect b^{(i)\top}(\mathcal C^{(i)})^+\vect g(x)\right]^2=
\sum_{i=1}^n w_i\,\vect b^{(i)\top}(\mathcal C^{(i)})^+\vect b^{(i)},
\end{align}
which is precisely the classical fully global bound $\BCg$.

\subsection{Existence of Saturating Estimators}

The saturation condition~\eqref{eq:SM_saturation_cond_final} should be understood as a set of compatibility equations for a single estimator $\hat\theta:X\to\mathbb R$. Although the finite condition $\vect b^{(i)}\in\operatorname{Range}(\mathcal C^{(i)})$ holds automatically, this does not by itself guarantee the existence of a common estimator $\hat\theta(x)$ satisfying Eq.~\eqref{eq:SM_saturation_cond_final} for all parameter points with nonzero weights.

It is useful to rewrite this condition in a purely linear-algebraic form. Let $|X|$ denote the number of outcomes and introduce the outcome vector
\begin{equation}
\vect h=(\hat\theta(x))_{x\in X}\in\mathbb R^{|X|},
\end{equation}
the probability vector
\begin{equation}
\vect p_i=(p(x|\theta_i))_{x\in X}\in\mathbb R^{|X|},
\end{equation}
and the diagonal matrix
\begin{equation}
P_i=\operatorname{diag}(p(x|\theta_i))_{x\in X}.
\end{equation}
Furthermore, define the matrix
\begin{equation}
G=\big(\vect g(x)\big)_{x\in X}\in\mathbb R^{n\times |X|},
\end{equation}
whose columns are the vectors $\vect g(x)$. Then
\begin{equation}
\mathbb E_{\theta_i}(\hat\theta)=\vect p_i^\top\vect h,
\end{equation}
and Eq.~\eqref{eq:SM_saturation_cond_final} implies
\begin{equation}\label{eq:SM_estimator_existence_condition}
P_i\left(\vect h-\vect 1\,\vect p_i^\top\vect h\right)
\in
\operatorname{Range}(G^\top),
\qquad
\forall i:\; w_i\neq0.
\end{equation}

Conversely, if Eq.~\eqref{eq:SM_estimator_existence_condition} holds, then for each $i$ there exists a vector $\vect u^{(i)}\in\mathbb R^n$ such that
\begin{equation}\label{sm1}
P_i\left(\vect h-\vect 1\,\vect p_i^\top\vect h\right)=G^\top \vect u^{(i)} .
\end{equation}
Equivalently,
\begin{equation}
\Delta_i\hat\theta(x)=\frac{\vect g(x)^\top \vect u^{(i)}}{p(x|\theta_i)},
\qquad \forall x\in X.
\end{equation}
Substituting this relation into the definition of the bias vector gives
\begin{equation}
\vect b^{(i)}=\sum_x \vect g(x)\Delta_i\hat\theta(x)=\mathcal C^{(i)}\vect u^{(i)}.
\end{equation}
Hence $\vect u^{(i)}$ is a solution of $\mathcal C^{(i)}\vect u^{(i)}=\vect b^{(i)}$, and therefore can be written as
\begin{equation}\label{u1}
\vect u^{(i)}=(\mathcal C^{(i)})^+\vect b^{(i)}+\vect z^{(i)},
\qquad
\vect z^{(i)}\in\ker(\mathcal C^{(i)}).
\end{equation}
Substituting Eq.~\eqref{u1} into Eq.~\eqref{sm1}, we recover the saturation equation~\eqref{eq:SM_saturation_cond_final}.

Therefore, a necessary and sufficient condition for the existence of an estimator saturating the fully global classical bound is
\begin{equation}\label{eq:SM_general_estimator_solution_condition}
\exists\,\vect h\in\mathbb R^{|X|}
\quad\text{such that}\quad
P_i\left(\vect h-\vect 1\,\vect p_i^\top\vect h\right)
\in
\operatorname{Range}(G^\top),
\quad
\forall i:\; w_i\neq0.
\end{equation}
Equivalently, the centered fluctuation of the estimator at each parameter point, weighted by the corresponding probability distribution, must lie in the range of $G^\top$.

A useful sufficient condition follows immediately from the rank of $G$. If the vectors $\{\vect g(x)\mid x\in X\}$ are linearly independent, then the matrix $G$
has full column rank, i.e.,
\begin{equation}
\operatorname{rank}(G)=|X|.
\end{equation}
Since $\operatorname{rank}(G^\top)=\operatorname{rank}(G)$, this implies
\begin{equation}\label{eq:SM_full_column_rank_condition}
\operatorname{Range}(G^\top)=\mathbb R^{|X|}.
\end{equation}
Under this condition, Eq.~\eqref{eq:SM_general_estimator_solution_condition} is automatically satisfied for every estimator $\hat\theta(x)$. Hence the fully global classical bound is tight for arbitrary estimators.

If one additionally imposes global unbiasedness,
\begin{equation}
\mathbb E_{\theta_i}(\hat\theta)=\theta_i,
\qquad
i=1,\ldots,n,
\end{equation}
then Eq.~\eqref{eq:SM_general_estimator_solution_condition} becomes
\begin{equation}\label{eq:SM_unbiased_estimator_solution_condition}
\exists\,\vect h\in\mathbb R^{|X|}
\quad\text{such that}\quad
\vect p_i^\top\vect h=\theta_i,
\qquad
P_i(\vect h-\theta_i\vect 1)
\in
\operatorname{Range}(G^\top),
\quad
\forall i:\; w_i\neq0.
\end{equation}
This condition ensures the existence of an unbiased estimator saturating the classical fully global bound.

\subsection{Anchor-based Construction of a Globally Unbiased Estimator}

We now show how the general existence condition simplifies when one restricts to globally unbiased estimators. Assume that there exists an estimator $\hat\theta:X\to\mathbb R$ satisfying
\begin{equation}
\mathbb E_{\theta_k}[\hat\theta]=\theta_k,
\qquad
k=1,\ldots,n.
\end{equation}
Then the general compatibility condition~\eqref{eq:SM_unbiased_estimator_solution_condition} implies that, for each fixed reference point $\theta_i$,
\begin{equation}\label{eq:SM_anchor_start}
P_i(\vect h-\theta_i\vect 1)\in \operatorname{Range}(G^\top),
\end{equation}
where $\vect h=(\hat\theta(x))_{x\in X}$.

By the definition of $\operatorname{Range}(G^\top)$, Eq.~\eqref{eq:SM_anchor_start} is equivalent to the existence of a vector $\vect c^{(i)}\in\mathbb R^n$ such that
\begin{equation}
P_i(\vect h-\theta_i\vect 1)=G^\top \vect c^{(i)}.
\end{equation}
Writing this relation componentwise gives
\begin{equation}
p(x|\theta_i)\bigl(\hat\theta(x)-\theta_i\bigr)=\vect g(x)^\top \vect c^{(i)},
\qquad
\forall x\in X,
\end{equation}
or equivalently,
\begin{equation}\label{eq:SM_anchor_ansatz}
\hat\theta(x)=\theta_i+\frac{\vect g(x)^\top \vect c^{(i)}}{p(x|\theta_i)}.
\end{equation}
Thus, once global unbiasedness is imposed, any estimator satisfying the general compatibility condition can be represented in this anchor-based form.

We now determine the corresponding condition on $\vect c^{(i)}$. Substituting Eq.~\eqref{eq:SM_anchor_ansatz} into the global unbiasedness condition
\begin{equation}
\mathbb E_{\theta_k}[\hat\theta]=\theta_k,
\qquad
k=1,\ldots,n,
\end{equation}
yields
\begin{equation}
\theta_i+\sum_x p(x|\theta_k)\frac{\vect g(x)^\top \vect c^{(i)}}{p(x|\theta_i)}=\theta_k,
\qquad
k=1,\ldots,n.
\end{equation}
To express this compactly, define the parameter-displacement vector
\begin{equation}
\Delta\vect\theta^{(i)}
=
\begin{pmatrix}
\theta_1-\theta_i\\
\vdots\\
\theta_n-\theta_i
\end{pmatrix},
\end{equation}
and the matrix
\begin{equation}
[M^{(i)}]_{jk}
=
\sum_{x\in X}\frac{p(x|\theta_j)}{p(x|\theta_i)}\,g_k(x).
\end{equation}
Then the global unbiasedness equations reduce to
\begin{equation}\label{eq:SM_anchor_linear_system}
\Delta\vect\theta^{(i)}=M^{(i)}\vect c^{(i)}.
\end{equation}
Therefore, within the anchor-based representation, the existence of a globally unbiased estimator is equivalent to
\begin{equation}\label{eq:SM_anchor_image_condition}
\Delta\vect\theta^{(i)}\in\operatorname{Range}(M^{(i)}).
\end{equation}
This is precisely the general existence condition rewritten in anchor-based form. It provides a more explicit criterion for globally unbiased estimators by replacing the abstract compatibility equation with an range condition for the matrix $M^{(i)}$.

Finally, to connect this construction with saturation of the fully global classical bound, note that Eq.~\eqref{eq:SM_anchor_ansatz} implies
\begin{equation}
\Delta\hat\theta_i(x)=\frac{\vect g(x)^\top\vect c^{(i)}}{p(x|\theta_i)},
\end{equation}
which has the same structure as the saturation equation~\eqref{eq:SM_saturation_cond_final}. Substituting this into the definition of the bias vector gives
\begin{equation}
\vect b^{(i)}=\sum_x \vect g(x)\Delta\hat\theta_i(x)=\mathcal C^{(i)}\vect c^{(i)}.
\end{equation}
Hence the anchor-based representation saturates the fully global classical bound whenever
\begin{equation}\label{eq:SM_anchor_saturation_condition}
\vect b^{(i)}=\mathcal C^{(i)}\vect c^{(i)}.
\end{equation}
Therefore, for globally unbiased estimators, the anchor-based construction provides a constructive reformulation of the general compatibility condition, and Eq.~\eqref{eq:SM_anchor_saturation_condition} gives the additional requirement for saturation of the fully global classical bound.

\section{Observation~1: Asymptotic Reduction to the Global Cram\'er--Rao Bound}

We begin with two preliminary lemmas and then turn to the proof of Observation~1.

\begin{lemma}\label{SM_obs1_lem1}
Let $p(x|\theta_1)$ and $p(x|\theta_2)$ be two probability distributions such that $p(x|\theta_1)>0$ whenever $p(x|\theta_2)>0$. Then
\begin{equation}
\sum_{x\in X} \frac{p(x|\theta_2)^2}{p(x|\theta_1)} \ge 1,
\end{equation}
with equality if and only if $p(x|\theta_2)=p(x|\theta_1)$ for all $x$.
\end{lemma}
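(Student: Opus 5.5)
The approach is to reduce the claim to a Cauchy--Schwarz (equivalently, nonnegativity of a $\chi^2$-type divergence) statement. Let $X_1=\{x: p(x|\theta_1)>0\}$. By hypothesis, every $x$ with $p(x|\theta_2)>0$ lies in $X_1$, so the sum $\sum_{x\in X}p(x|\theta_2)^2/p(x|\theta_1)$ is understood over $X_1$, terms with $p(x|\theta_2)=0$ contributing zero. In particular $\sum_{x\in X_1}p(x|\theta_2)=1$, while $\sum_{x\in X_1}p(x|\theta_1)=1$ trivially.

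The key step is the identity
\begin{equation}
\sum_{x\in X_1}\frac{p(x|\theta_2)^2}{p(x|\theta_1)}-1
=\sum_{x\in X_1}\frac{\bigl[p(x|\theta_2)-p(x|\theta_1)\bigr]^2}{p(x|\theta_1)} .
\end{equation}
To obtain it, expand the square on the right: the result is $\sum p_2^2/p_1-2\sum p_2+\sum p_1=\sum p_2^2/p_1-1$, using both normalizations over $X_1$. The right-hand side is a sum of nonnegative terms, which gives the inequality. As an alternative route, one could apply Cauchy--Schwarz directly:
\begin{equation}
1=\Bigl(\sum_{X_1}\tfrac{p_2}{\sqrt{p_1}}\sqrt{p_1}\Bigr)^2\le\sum_{X_1}\tfrac{p_2^2}{p_1}\sum_{X_1}p_1 .
\end{equation}

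For the equality case, equality holds iff every term $[p(x|\theta_2)-p(x|\theta_1)]^2/p(x|\theta_1)$ vanishes on $X_1$, i.e.\ $p(x|\theta_2)=p(x|\theta_1)$ for all $x\in X_1$. Off $X_1$ we have $p(x|\theta_1)=0$ by definition, and also $p(x|\theta_2)=0$ by the support hypothesis. Hence the two distributions coincide on all of $X$. The converse direction, that equal distributions give a sum of exactly $1$, is immediate.

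The only delicate point is bookkeeping about supports. One must restrict to $X_1$ to avoid dividing by zero, and verify that $p(x|\theta_2)$ still sums to one there; this is exactly where the hypothesis is used. Under the paper's standing assumption $p(x|\theta)>0$ everywhere, this step is trivial.
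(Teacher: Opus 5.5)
Your proof is correct and is essentially the paper's argument: your identity $\sum_{x} [p(x|\theta_2)-p(x|\theta_1)]^2/p(x|\theta_1)=\sum_{x} p(x|\theta_1)\,(Y(x)-1)^2$ is exactly the variance of the likelihood ratio $Y(x)=p(x|\theta_2)/p(x|\theta_1)$ under $p(x|\theta_1)$. The paper bounds this variance below by zero using $\mathbb E_{\theta_1}[Y]=1$. Your explicit support bookkeeping is slightly more careful than the paper. You restrict to $X_1$, and in the equality case you use the hypothesis to conclude $p(x|\theta_2)=0$ off $X_1$. The paper goes from ``$Y$ constant'' to $p(x|\theta_2)=p(x|\theta_1)$ for all $x$ without comment.
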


\begin{proof}
Define $Y(x)=p(x|\theta_2)/p(x|\theta_1)$, so that
\begin{equation}
\sum_{x\in X} \frac{p(x|\theta_2)^2}{p(x|\theta_1)}
=
\sum_{x\in X} p(x|\theta_1)Y(x)^2
=
\mathbb E_{\theta_1}[Y^2].
\end{equation}
Since
\begin{equation}
\mathbb E_{\theta_1}[Y]
=
\sum_{x\in X} p(x|\theta_1)Y(x)
=
\sum_x p(x|\theta_2)
=
1,
\end{equation}
the non-negativity of the variance gives
\begin{equation}
\mathbb E_{\theta_1}[Y^2]\ge \bigl(\mathbb E_{\theta_1}[Y]\bigr)^2=1.
\end{equation}
Equality holds if and only if $Y$ is constant, which implies $Y(x)=1$ for all $x$, i.e., $p(x|\theta_2)=p(x|\theta_1)$.
\end{proof}

\begin{lemma}\label{SM_obs1_lem2}
Consider the block matrix
\begin{equation}\label{SM_obs1_Lemm2}
M=
\begin{pmatrix}
D & \vect v \\
\vect v^\top & c
\end{pmatrix},
\end{equation}
where $D\in\mathbb R^{(n-1)\times(n-1)}$ is symmetric, $\vect v\in\mathbb R^{n-1}$, and $c\in\mathbb R$. Assume that $\vect v\in\operatorname{Range}(D)$, $\|D\|\to\infty$, and $\|\vect v\|=O(1)$, while $c=O(1)$. Then the Moore--Penrose pseudoinverse satisfies
\begin{equation}
\lim_{\|D\|\to\infty} M^+
=
\begin{pmatrix}
0_{(n-1)\times (n-1)} & \vect 0 \\
\vect 0^\top & 1/c
\end{pmatrix}.
\end{equation}
\end{lemma}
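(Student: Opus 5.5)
The plan is to compute $M^+$ exactly for finite $D$, using a generalized Schur-complement formula, and only then take the limit.

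Write $\vect u\equiv D^+\vect v$. The hypothesis $\vect v\in\operatorname{Range}(D)$ gives $DD^+\vect v=\vect v$, and hence also $D\vect u=\vect v$. Define the Schur complement
\begin{equation*}
s\equiv c-\vect v^\top D^+\vect v .
\end{equation*}
I propose the candidate
\begin{equation*}
X=\begin{pmatrix} D^+ + \vect u\vect u^\top/s & -\vect u/s\\ -\vect u^\top/s & 1/s\end{pmatrix},
\end{equation*}
and I will verify the four Penrose conditions directly.

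First I compute the block products. They should reduce to
\begin{equation*}
MX=XM=\begin{pmatrix} DD^+ & \vect 0\\ \vect 0^\top & 1\end{pmatrix}.
\end{equation*}
The key cancellations are these:
\begin{itemize}
\item the off-diagonal blocks contain $-DD^+\vect v/s+\vect v/s=0$;
\item the upper-left block of $MX$ contains $DD^+\vect v\vect u^\top/s-\vect v\vect u^\top/s=0$;
\item the lower-right block is $(c-\vect v^\top\vect u)/s=1$.
\end{itemize}
Since $DD^+$ is a symmetric projector, $MX$ and $XM$ are symmetric. Then $MXM=M$ follows from $DD^+D=D$ and $DD^+\vect v=\vect v$. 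Likewise $XMX=X$ follows from $D^+DD^+=D^+$ and $DD^+\vect u=\vect u$; the latter holds because $\vect u\in\operatorname{Range}(D^+)=\operatorname{Range}(D)$. By uniqueness of the Moore--Penrose inverse, $X=M^+$.

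Second, I take the limit. The main obstacle is interpreting $\|D\|\to\infty$ so that $D^+\to 0$. Growth of the largest eigenvalue alone does not suffice, because a nonzero eigenvalue of $D$ could stay bounded. I will therefore read the hypothesis, as in the intended application, as saying that every nonzero eigenvalue of $D$ diverges. Equivalently, $\|D^+\|\to0$; in the application $D$ consists of information-matrix entries growing like $e^{O(m)}$.

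With this reading, the estimates follow in order:
\begin{itemize}
\item $\|\vect u\|\le\|D^+\|\,\|\vect v\|\to 0$, since $\|\vect v\|=O(1)$;
\item $\vect v^\top D^+\vect v\to0$, so $s\to c$;
\item assuming $c\neq0$ (needed for $1/c$ to make sense; in the application $c$ is a Fisher information, hence positive), $s$ is bounded away from zero for large $\|D\|$.
\end{itemize}
Consequently every block of $X$ except the lower-right one tends to zero, and the lower-right block $1/s$ tends to $1/c$. This gives the stated limit.
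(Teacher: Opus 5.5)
Your proposal is correct and follows the paper's route: the paper writes down the same generalized Schur-complement formula for $M^+$ (citing Burns et al.\ rather than checking the Penrose conditions as you do) and then takes the same limit $s\to c$ with every block except the lower-right one vanishing. Your explicit readings $\|D^+\|\to 0$ (all nonzero eigenvalues of $D$ diverge, not just $\|D\|$) and $c\neq 0$ are exactly what the paper assumes without saying when it writes $D^+\simeq O(\|D\|^{-1})$ and $1/c$; they are genuinely needed, since $\|D\|\to\infty$ alone does not make $D^+$ vanish.
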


\begin{proof}
Let $s\equiv c-\vect v^\top D^+\vect v\neq 0$ denote the generalized Schur complement. According to Ref.~\cite{burns1974generalized}, the pseudoinverse of $M$ is
\begin{equation}\label{SM_obs1_M+_0}
M^+
=
\begin{pmatrix}
D^+ + D^+\vect v\vect v^\top D^+/s & -D^+\vect v/s \\
-\vect v^\top D^+/s & 1/s
\end{pmatrix}.
\end{equation}
Under the asymptotic scaling $\|D\|\to\infty$ with $\|\vect v\|=O(1)$ and $c=O(1)$, we have $D^+\simeq O(\|D\|^{-1})$, $D^+\vect v\simeq O(\|D\|^{-1})$, and $s\simeq c+O(\|D\|^{-1})$. Hence
\begin{equation}
M^+
\simeq
\begin{pmatrix}
O(\|D\|^{-1}) & O(\|D\|^{-1}) \\
O(\|D\|^{-1}) & 1/c + O(\|D\|^{-1})
\end{pmatrix},
\end{equation}
so that the only nonvanishing entry in the limit is the $(n,n)$ component,
\begin{equation}
[M^+]_{nn}\to \frac{1}{c}\qquad \text{as}\qquad \|D\|\to\infty.
\end{equation}
\end{proof}

\textbf{Observation~1.}
For finite parameter dimension $n$, the fully global bound reduces to the global Cram\'er--Rao bound in the limit of many independent repetitions $m\gg1$,
\begin{equation}\label{eq:observation_1}
\overline{\var}(\hat\theta)
\ge
\mathcal B_{\rm C}^{\rm FG}
\overset{m\gg1}{\simeq}
\mathcal B_{\rm C}^{\rm GCR},
\end{equation}
when the test functions are chosen as $g_j(x)=\partial_\theta p(x|\theta_j)$ for all $j$ and $\{p(x|\theta_j)\}$ is not identical.
Moreover, this asymptotic bound is saturated by the maximum-likelihood estimator (MLE).

\begin{proof}
We first prove that $\BCg\simeq \BCd$ as $m\to\infty$, and then show that the maximum-likelihood estimator saturates both bounds in this limit.

Consider the classical information matrix for $m$ independent measurement repetitions at the parameter value $\theta_n$. It is given by
\begin{equation}
[\mathcal C^{(n)}(m)]_{jk}
=
\sum_{\vec x}
\frac{\partial_\theta p(\vec x|\theta_j)\,\partial_\theta p(\vec x|\theta_k)}
     {p(\vec x|\theta_n)},
\end{equation}
where $\vec x=(x_1,\dots,x_m)$ denotes the outcomes of the $m$ repetitions, so that $p(\vec x|\theta)=\prod_{r=1}^m p(x_r|\theta)$.
Define $s_j(x)=\partial_\theta\log p(x|\theta_j)$. Then
\begin{equation}
\partial_\theta p(\vec x|\theta_j)=p(\vec x|\theta_j)\sum_{r=1}^m s_j(x_r),
\end{equation}
which gives
\begin{align}\label{SM_Obs_1_C1}
[\mathcal C^{(n)}(m)]_{jk}
&=
\sum_{\vec x}
\frac{p(\vec x|\theta_j)p(\vec x|\theta_k)}{p(\vec x|\theta_n)}
\left(\sum_{r=1}^m s_j(x_r)\right)
\left(\sum_{t=1}^m s_k(x_t)\right)\nonumber\\
&=
\sum_{r,t=1}^m
\left(
\sum_{\vec x}
p(\vec x|\theta_n)
\frac{p(\vec x|\theta_j)p(\vec x|\theta_k)}{p(\vec x|\theta_n)^2}
s_j(x_r)s_k(x_t)
\right).
\end{align}

Let $a_j(x)=p(x|\theta_j)/p(x|\theta_n)$. Then
\begin{equation}
\frac{p(\vec x|\theta_j)p(\vec x|\theta_k)}{p(\vec x|\theta_n)^2}
=
\prod_{u=1}^m a_j(x_u)a_k(x_u).
\end{equation}
Substituting this into Eq.~\eqref{SM_Obs_1_C1}, we obtain
\begin{equation}\label{SM_obs1_expect}
[\mathcal C^{(n)}(m)]_{jk}
=
\sum_{r,t=1}^m
\mathbb E_{\theta_n}\!\left[
s_j(x_r)s_k(x_t)\prod_{u=1}^m a_j(x_u)a_k(x_u)
\right].
\end{equation}

Introduce the single-shot quantities
\begin{align}
\Lambda_{jk}
&\equiv
\sum_{x} p(x|\theta_n)a_j(x)a_k(x)
=
\sum_x \frac{p(x|\theta_j)p(x|\theta_k)}{p(x|\theta_n)},
\label{SM_obs1_imp_1}\\
R_{jk}
&\equiv
\sum_{x} p(x|\theta_n)s_j(x)a_j(x)a_k(x)
=
\sum_x \frac{p(x|\theta_k)\partial_\theta p(x|\theta_j)}{p(x|\theta_n)}.
\label{SM_obs1_imp_2}
\end{align}
For $r\neq t$, the expectation in Eq.~\eqref{SM_obs1_expect} factorizes, giving $\Lambda_{jk}^{m-2}R_{jk}R_{kj}$.
For $r=t$, it gives $\Lambda_{jk}^{m-1}[\mathcal C^{(n)}(1)]_{jk}$, where
\begin{equation}
[\mathcal C^{(n)}(1)]_{jk}
=
\sum_{x} p(x|\theta_n)s_j(x)s_k(x)a_j(x)a_k(x)
=
\sum_{x}\frac{\partial_\theta p(x|\theta_j)\,\partial_\theta p(x|\theta_k)}{p(x|\theta_n)}.
\end{equation}
Collecting the $m(m-1)$ off-diagonal terms and the $m$ diagonal terms yields
\begin{equation}\label{SM_obs1_Cm}
[\mathcal C^{(n)}(m)]_{jk}
=
m(m-1)\Lambda_{jk}^{m-2}R_{jk}R_{kj}
+
m\Lambda_{jk}^{m-1}[\mathcal C^{(n)}(1)]_{jk}.
\end{equation}

To analyze the block structure, set $j,k=1,\dots,n-1$ and define
\begin{equation}
D_{jk}=\frac{[\mathcal C^{(n)}(m)]_{jk}}{m},\qquad
v_j=\frac{[\mathcal C^{(n)}(m)]_{jn}}{m},\qquad
c=\frac{[\mathcal C^{(n)}(m)]_{nn}}{m}.
\end{equation}
Equation~\eqref{SM_obs1_Cm} then takes the form
\begin{equation}
\frac{\mathcal C^{(n)}(m)}{m}
=
\begin{pmatrix}
D & \vect v\\
\vect v^\top & c
\end{pmatrix}.
\end{equation}

Using $R_{jn}=\sum_{x}\partial_\theta p(x|\theta_j)=0$ and $\Lambda_{jn}=\Lambda_{nj}=\sum_{x}p(x|\theta_j)=1$, we find from Eq.~\eqref{SM_obs1_Cm} that $c=[\mathcal C^{(n)}(1)]_{nn}=O(1)$ and $v_j=[\mathcal C^{(n)}(1)]_{jn}=O(1)$.
Since the distributions $p(x|\theta_j)$ are not all identical, there exists some index $i$ such that $p(x|\theta_i)\neq p(x|\theta_n)$. Lemma~\ref{SM_obs1_lem1} then implies $\Lambda_{ii}>1$, and Eq.~\eqref{SM_obs1_Cm} gives $[\mathcal C^{(n)}(m)]_{ii}/m\to\infty$ as $m\to\infty$. Consequently, $\|D\|\to\infty$.

The condition $\vect v\in\operatorname{Range}(D)$ required in Lemma~\ref{SM_obs1_lem2} is satisfied because $\operatorname{Range}(D)=\operatorname{Span}\{\vect p(\vec x)\}$ with $[\vect p(\vec x)]_j=p(\vec x|\theta_j)$ for $j=1,\dots,n-1$, as follows from Eq.~\eqref{eq:SM_im_C}. Applying Lemma~\ref{SM_obs1_lem2}, we obtain
\begin{equation}
\lim_{m\to\infty}\left(\frac{\mathcal C^{(n)}(m)}{m}\right)^+
=
\begin{pmatrix}
0_{(n-1)\times(n-1)} & \vect 0\\
\vect 0^\top & 1/[\mathcal C^{(n)}(1)]_{nn}
\end{pmatrix}.
\end{equation}
Since $(A/m)^+=mA^+$ for any scalar $m$, it follows that
\begin{equation}
[\mathcal C^{(n)}(m)]^+
\simeq
\frac{1}{m}
\begin{pmatrix}
0_{(n-1)\times(n-1)} & \vect 0\\
\vect 0^\top & 1/[\mathcal C^{(n)}(1)]_{nn}
\end{pmatrix}.
\end{equation}
Therefore,
\begin{equation}
\vect b^\top[\mathcal C^{(n)}(m)]^+\vect b
\simeq
\frac{b_n^2}{m[\mathcal C^{(n)}(1)]_{nn}}.
\end{equation}
Applying the same reasoning to each $[\mathcal C^{(i)}(m)]^+$ and substituting into the fully global bound~\eqref{SM_classical_gen_bound} yields
\begin{equation}
\BCg
=
\sum_{i=1}^n w_i\,\vect b^\top[\mathcal C^{(i)}(m)]^+\vect b
\simeq
\sum_{i=1}^n w_i\frac{b_i^2}{m[\mathcal C^{(i)}(1)]_{ii}}.
\end{equation}

Now consider the global Cram\'er--Rao bound. Since $[\mathcal C^{(i)}(m)]_{ii}=m[\mathcal C^{(i)}(1)]_{ii}$, we have
\begin{equation}
\BCd
=
\sum_{i=1}^n w_i\frac{b_i^2}{[\mathcal C^{(i)}(m)]_{ii}}
=
\sum_{i=1}^n w_i\frac{b_i^2}{m[\mathcal C^{(i)}(1)]_{ii}}.
\end{equation}
Comparing the two expressions confirms that $\BCg\simeq \BCd$ in the limit $m\to\infty$.

The maximum-likelihood estimator asymptotically achieves the Cram\'er--Rao bound in the regime $m\gg 1$. For the weighted variance with positive weights $w_i$, the MLE therefore saturates the global Cram\'er--Rao bound,
\begin{equation}
\overline{\var}(\hat\theta_{\rm MLE})
\approx
\sum_{i=1}^n w_i\frac{b_i^2}{m\FC(\theta_i)}
=
\BCd,
\end{equation}
where $\FC(\theta_i)$ denotes the classical Fisher information for a single measurement at $\theta_i$, corresponding to $[\mathcal C^{(i)}(1)]_{ii}$ in our notation.
Since the fully global bound reduces to the global Cram\'er--Rao bound in this limit, the MLE also saturates the fully global bound,
\begin{equation}
\overline{\var}(\hat\theta_{\rm MLE})\approx \BCg.
\end{equation}
This proves both the asymptotic equivalence of the two bounds and their saturation by the MLE.
\end{proof}

\section{Observation~2: Broad-Domain Attainability Condition}

\textbf{Observation~2.}---
Consider fixed finite repetition number \(m\), including non-asymptotic regimes in which the data do not localize the parameter. Let the sampled parameter values
\(\Theta=\{\theta_1,\ldots,\theta_n\}\) provide an increasingly dense sampling of a broad domain. Then the relevant attainability question is whether the global score family spans the estimator fluctuations required by the fully global saturation equations. When the compatibility/range condition derived below is satisfied, the fully global level is attainable,
\begin{equation}\label{eq:observation_2}
\overline{\var}(\hat\theta)\simeq\BCg .
\end{equation}
A useful sufficient condition is that the score matrix \(G=(\vect g(x))_{x\in X}\) has full column rank,
\begin{equation}\label{eq:obs2_full_rank_sufficient}
\operatorname{rank}(G)=|X|.
\end{equation}
Increasing the number of sampled parameter values can provide this rank condition when the corresponding score vectors become linearly independent.

\begin{proof}
By Eq.~\eqref{eq:SM_general_estimator_solution_condition}, a necessary and sufficient condition for the existence of an estimator saturating the fully global classical bound is
\begin{equation}\label{eq:SM_obs2_existence}
P_i\left(\vect h-\vect 1\,\vect p_i^\top\vect h\right)\in \operatorname{Range}(G^\top),
\qquad
\forall i:\; w_i\neq0,
\end{equation}
where \(\vect h=(\hat\theta(x))_{x\in X}\in\mathbb R^{|X|}\) and \(G=(\vect g(x))_{x\in X}\in\mathbb R^{n\times |X|}\).
This is the compatibility/range condition: the centered estimator fluctuation at each active parameter value, weighted by the probability vector at that value, must lie in the range of the score map \(G^\top\).

If \(\operatorname{rank}(G)=|X|\), then \(G\) has full column rank. Since \(\operatorname{rank}(G^\top)=\operatorname{rank}(G)=|X|\) and \(G^\top:\mathbb R^n\to\mathbb R^{|X|}\), it follows that
\begin{equation}
\operatorname{Range}(G^\top)=\mathbb R^{|X|}.
\end{equation}
Therefore, for any estimator \(\hat\theta(x)\) and for every \(i\) with \(w_i\neq0\), the vector
\begin{equation}
P_i\left(\vect h-\vect 1\,\vect p_i^\top\vect h\right)
\end{equation}
automatically belongs to \(\operatorname{Range}(G^\top)\). Hence the compatibility condition~\eqref{eq:SM_obs2_existence} is automatically satisfied, and the fully global classical bound is tight.

For finite \(m\), increasing the number of sampled parameter values enlarges the ambient score space. If the vectors \(\{\vect g(x)\}_{x\in X}\) become linearly independent, then Eq.~\eqref{eq:obs2_full_rank_sufficient} holds, and Eq.~\eqref{eq:observation_2} follows.
\end{proof}


\section{Two-Outcome Example}

\subsection{Classical Information Matrix for \texorpdfstring{$m$}{m} Measurements}

We now specialize to the case of two-outcome measurements, where each measurement yields $x_j=\pm1$ and the probability distribution is
\begin{equation}\label{SM_2_outcomes}
p(x_j|\theta)=\frac{1+x_j f(\theta)}{2},
\end{equation}
with $f(\theta)$ a smooth function of the parameter $\theta$. For $m$ independent measurements, the joint probability distribution factorizes as $p(\vect x|\theta)=\prod_{j=1}^m p(x_j|\theta)$, where $\vect x=(x_1,\ldots,x_m)$.

Define $s_j(x)=\partial_\theta\log p(x|\theta_j)$. Then
\begin{equation}
\partial_\theta p(\vect x|\theta_j)=p(\vect x|\theta_j)\sum_{r=1}^m s_j(x_r),
\end{equation}
and the classical information matrix takes the form
\begin{equation}\label{SM_eg_m}
[\mathcal C^{(i)}(m)]_{jk}
=
\sum_{x_1,\ldots,x_m=\pm1}
\frac{p(\vect x|\theta_j)p(\vect x|\theta_k)}{p(\vect x|\theta_i)}
\left(\sum_{r=1}^m s_j(x_r)\right)
\left(\sum_{t=1}^m s_k(x_t)\right).
\end{equation}
Let $m_+$ denote the number of outcomes with $x_j=+1$ among the $m$ measurements. A straightforward calculation gives
\begin{equation}
\sum_{r=1}^m s_j(x_r)
=
m_+s_j(+1)+(m-m_+)s_j(-1)
=
\frac{f'_j}{1-f_j^2}\bigl(2m_+-m-mf_j\bigr),
\end{equation}
where $f_i=f(\theta_i)$ and $f_i'=\partial_\theta f(\theta_i)$.
Define
\begin{equation}
F_{jk}^\pm=\frac{p(\pm1|\theta_j)p(\pm1|\theta_k)}{p(\pm1|\theta_i)}
=\frac{(1\pm f_j)(1\pm f_k)}{2(1\pm f_i)},
\qquad
A_j=\frac{f'_j}{1-f_j^2}.
\end{equation}
Then Eq.~\eqref{SM_eg_m} becomes
\begin{equation}\label{SM_eg_2}
[\mathcal C^{(i)}(m)]_{jk}
=
A_jA_k
\sum_{m_+=0}^m
\binom{m}{m_+}
(F_{jk}^+)^{m_+}(F_{jk}^-)^{m-m_+}
\bigl(2m_+-m-mf_j\bigr)\bigl(2m_+-m-mf_k\bigr).
\end{equation}
To evaluate this sum, introduce the generating function
\begin{equation}
G_{jk}(s)
=
\sum_{m_+=0}^m
\binom{m}{m_+}
(F_{jk}^+)^{m_+}(F_{jk}^-)^{m-m_+}e^{s(2m_+-m)}.
\end{equation}
Then Eq.~\eqref{SM_eg_2} can be written as
\begin{equation}\label{Sm_eg_4}
[\mathcal C^{(i)}(m)]_{jk}
=
A_jA_k
\left[
\frac{d^2}{ds^2}
-m(f_j+f_k)\frac{d}{ds}
+m^2f_jf_k
\right]
G_{jk}(s)\Big|_{s=0}.
\end{equation}

Since
\begin{equation}
G_{jk}(s)=e^{-sm}\bigl(F_{jk}^+e^{2s}+F_{jk}^-\bigr)^m,
\end{equation}
its derivatives at $s=0$ are
\begin{align}
G_{jk}(0)&=(F_{jk}^++F_{jk}^-)^m=X^m, \nonumber\\
G_{jk}'(0)&=m(F_{jk}^++F_{jk}^-)^{m-1}(F_{jk}^+-F_{jk}^-)=mX^{m-1}Y, \nonumber\\
G_{jk}''(0)&=m(F_{jk}^++F_{jk}^-)^{m-2}\bigl[(m-1)Y^2+X^2\bigr]
= mX^{m-2}\bigl[(m-1)Y^2+X^2\bigr],
\end{align}
where
\begin{equation}
X=F_{jk}^++F_{jk}^-,
\qquad
Y=F_{jk}^+-F_{jk}^-.
\end{equation}
Substituting these expressions into Eq.~\eqref{Sm_eg_4} gives
\begin{equation}\label{Sm_eg_6}
[\mathcal C^{(i)}(m)]_{jk}
=
mA_jA_kX^{m-2}\Big[(m-1)Y^2+X^2-m(f_j+f_k)XY+mf_jf_kX^2\Big].
\end{equation}
Using
\begin{equation}
f_j+f_k=f_iX+Y,
\qquad
f_jf_k=X-1+f_iY,
\end{equation}
the bracket in Eq.~\eqref{Sm_eg_6} simplifies to
\begin{equation}
\bigl[1+m(X-1)\bigr](X^2-Y^2).
\end{equation}
Hence
\begin{equation}
[\mathcal C^{(i)}(m)]_{jk}
=
mA_jA_kX^{m-2}\bigl[1+m(X-1)\bigr](X^2-Y^2).
\end{equation}

A direct calculation gives
\begin{align}
X
&=
F_{jk}^++F_{jk}^-
=
1+\frac{(f_j-f_i)(f_k-f_i)}{1-f_i^2},
\nonumber\\
X^2-Y^2
&=
4F_{jk}^+F_{jk}^-
=
\frac{(1-f_j^2)(1-f_k^2)}{1-f_i^2},
\nonumber\\
A_j&=\frac{f'_j}{1-f_j^2}.
\end{align}
Substituting these expressions, we obtain
\begin{equation}
[\mathcal C^{(i)}(m)]_{jk}
=
m\bigl(1+\Delta_{i;jk}\bigr)^{m-2}\bigl(1+m\Delta_{i;jk}\bigr)\frac{f'_jf'_k}{1-f_i^2},
\end{equation}
where
\begin{equation}
\Delta_{i;jk}=\frac{(f_j-f_i)(f_k-f_i)}{1-f_i^2}.
\end{equation}

For a single measurement, this reduces to
\begin{equation}\label{SM_single_shot_CIM}
[\mathcal C^{(i)}(1)]_{jk}
=
\frac{f'_jf'_k}{1-f_i^2}.
\end{equation}
Therefore, the $m$-shot information matrix takes the form
\begin{equation}\label{SM_m_shot_CIM}
[\mathcal C^{(i)}(m)]_{jk}
=
m\,h(m,\Delta_{i;jk})\,[\mathcal C^{(i)}(1)]_{jk},
\end{equation}
which is the result quoted in the main text, with $h(m,x)=(1+x)^{m-2}(1+mx)$.

\subsection{Comparison with the Maximum-Likelihood Estimator}
We now turn to the single-shot case. Writing $\vect f=(f_1,\ldots,f_n)^\top$ and $\vect f'=(f_1',\ldots,f_n')^\top$, with $f_i=f(\theta_i)$ and $f_i'=\partial_\theta f(\theta_i)$, Eq.~\eqref{SM_single_shot_CIM} gives the rank-one form
\begin{equation}\label{eq:SM_single_shot_rankone}
\mathcal C^{(i)}(1)=\frac{\vect f'\vect f'^\top}{1-f_i^2}.
\end{equation}
For the choice $g_k(x)=\partial_\theta p(x|\theta_k)$, the bias vectors become identical for all $i$, namely $\vect b^{(i)}=\vect b$.
Since $\vect b\in\operatorname{Range}(\mathcal C^{(i)})$, we can write $\vect b=\alpha \vect f'$ for some scalar $\alpha$.

The global Cram\'er--Rao bound then becomes
\begin{equation}\label{SM_eg_diag}
\BCd
=
\sum_{i=1}^n w_i\frac{(b_{i}^{(i)})^2}{[\mathcal C^{(i)}]_{ii}}
=
\alpha^2\sum_{i=1}^n w_i(1-f_i^2).
\end{equation}
Using the pseudoinverse of a rank-one matrix, we further obtain
\begin{equation}
(\mathcal C^{(i)})^+
=
(1-f_i^2)\frac{\vect f'\vect f'^\top}{(\vect f'^\top\vect f')^2},
\qquad
\mathcal C_w^+
=
\frac{1}{\sum_i w_i/(1-f_i^2)}\,
\frac{\vect f'\vect f'^\top}{(\vect f'^\top\vect f')^2},
\end{equation}
where $\mathcal C_w=\sum_i w_i\mathcal C^{(i)}$.
Hence
\begin{equation}\label{SM_eg_gen}
\BCc
=
\vect b_w^\top\mathcal C_w^+\vect b_w
=
\frac{\alpha^2}{\sum_i w_i/(1-f_i^2)},
\qquad
\BCg
=
\sum_i w_i\,\vect b^\top(\mathcal C^{(i)})^+\vect b
=
\alpha^2\sum_i w_i(1-f_i^2),
\end{equation}
where we used $\vect b_w=\sum_i w_i\vect b=\vect b$.
Therefore,
\begin{equation}
\BCg=\BCd\ge \BCc.
\end{equation}
This is the single-shot two-outcome relation quoted in the main text.

For a single-shot binary measurement with $x\in\{+1,-1\}$, the maximum-likelihood estimator takes the form
\begin{equation}\label{sm_MLE}
\hat\theta_{\rm MLE}(x)
=
\begin{cases}
\theta_+, & x=+1,\\
\theta_-, & x=-1,
\end{cases}
\end{equation}
where $\theta_+=\arg\max_\theta f(\theta)$ and $\theta_-=\arg\min_\theta f(\theta)$.
Equivalently,
\begin{equation}
\hat\theta_{\rm MLE}(x)=\frac{\theta_++\theta_-}{2}+\frac{\theta_+-\theta_-}{2}\,x.
\end{equation}
Its expectation value and second moment are
\begin{equation}
\mathbb E_\theta[\hat\theta_{\rm MLE}(x)]
=
\frac{\theta_++\theta_-}{2}+\frac{\theta_+-\theta_-}{2}f(\theta),
\qquad
\mathbb E_\theta[\hat\theta_{\rm MLE}(x)^2]
=
\frac{\theta_+^2+\theta_-^2}{2}+\frac{\theta_+^2-\theta_-^2}{2}f(\theta).
\end{equation}
Hence $b_i=\partial_\theta \mathbb E_{\theta_i}[\hat\theta_{\rm MLE}(x)]=\alpha f_i'$ with $\alpha=(\theta_+-\theta_-)/2$, confirming $\vect b=\alpha\vect f'$, and
\begin{equation}
\var_{\theta_i}[\hat\theta_{\rm MLE}(x)]
=
\alpha^2(1-f_i^2).
\end{equation}
Therefore,
\begin{equation}\label{sm_MLE_result}
\overline{\var}(\hat\theta_{\rm MLE})
=
\sum_{i=1}^n w_i\,\var_{\theta_i}[\hat\theta_{\rm MLE}(x)]
=
\alpha^2\sum_{i=1}^n w_i(1-f_i^2)
=
\BCg
=
\BCd.
\end{equation}
Thus, in the single-shot two-outcome case, the fully global and global Cram\'er--Rao bounds coincide, while the global Barankin bound is looser. Moreover, the fully global bound is saturated by the maximum-likelihood estimator.


\section{Derivation of Quantum Bounds}

We now extend the classical hierarchy to the quantum setting by optimizing the classical bounds over all quantum measurements while keeping the bias constraints fixed.

\subsection{From Classical to Quantum Bounds}
\label{SM_cl_to_quantum}

Consider a quantum statistical model $\rho(\theta)$ measured by a positive-operator-valued measure (POVM) $\{E_x\}$, so that the outcome probabilities are $p(x|\theta)=\tr[E_x\rho(\theta)]$. Choosing test functions of the form $g_k(x)=\tr[E_xG_k]$, where $G_k$ is a Hermitian operator associated with $\rho(\theta_k)$, the quantum bound for a fixed hierarchy matrix $A$ is obtained by minimizing the classical bound over all POVMs,
\begin{equation}
\BQ(A)=\min_{\{E_x\}}\BC(A)
=
\frac{\left(\sum_{i=1}^n w_i\,\vect b^{(i)\top}\vect a_i\right)^2}
{\sum_{i=1}^n w_i\,\max_{\{E_x\}}\!\left[\vect a_i^{\top}\mathcal C^{(i)}\vect a_i\right]}.
\end{equation}
Thus it remains to determine $\max_{\{E_x\}} \vect a_i^{\top}\mathcal C^{(i)}\vect a_i$.

Using $p(x|\theta_i)=\tr[E_x\rho_i]$ with $\rho_i=\rho(\theta_i)$ and $g_j(x)=\tr[E_xG_j]$, we obtain
\begin{equation}\label{SM_quantumm_1}
\vect a_i^{\top}\mathcal C^{(i)}\vect a_{i}
=
\sum_{x\in X} \frac{\tr[E_x G_{A;i}]^2}{\tr[E_x\rho_i]},
\end{equation}
where
\begin{equation}
 G_{A;i}=\sum_{j=1}^n A_{ji}G_j.
\end{equation}

Following Ref.~\cite{PhysRevLett.130.260801}, we introduce the linear superoperator $\Omega_\rho$. For an operator $\rho=\sum_i p_i\ket{i}\bra{i}$ and any operator $X$,
\begin{equation}\label{SM_def_SLD}
\Omega_\rho(X)=\sum_{i,j}\frac{2}{p_i+p_j}\ket{i}\bra{i}X\ket{j}\bra{j},
\end{equation}
where terms with $p_i+p_j=0$ are omitted. In particular, when $X=\partial_\theta\rho(\theta)$, the operator $\Omega_{\rho(\theta)}(\partial_\theta\rho(\theta))$ is the standard symmetric logarithmic derivative (SLD). More generally, $\Omega_\rho(X)$ satisfies
\begin{equation}\label{SM_quantumm_SLD}
\frac{1}{2}\bigl[\Omega_\rho(X)\rho+\rho\,\Omega_\rho(X)\bigr]
=
X-\Pi_\perp X\Pi_\perp,
\end{equation}
where $\Pi_\perp$ projects onto the kernel of $\rho$.

Using Eq.~\eqref{SM_quantumm_SLD}, Eq.~\eqref{SM_quantumm_1} can be written as
\begin{equation}
\vect a_i^{\top}\mathcal C^{(i)}\vect a_{i}
=
\sum_{x\in X}
\frac{
\left(
\tr\!\left[E_x\,\tfrac{1}{2}\bigl(\Omega_{\rho_i}( G_{A;i})\rho_i+\rho_i\Omega_{\rho_i}( G_{A;i})\bigr)\right]
\right)^2
}
{\tr[E_x\rho_i]}.
\end{equation}
If necessary, one may regularize $\rho_i$ as $\rho_i+\epsilon I$ with $\epsilon\ll1$ and take the limit $\epsilon\to0$ at the end, so that $\Pi_\perp=0$.

We now use the Cauchy--Schwarz inequality $|\tr[A^\dagger B]|^2\le \tr[A^\dagger A]\tr[B^\dagger B]$ with $A=\sqrt{E_x}\sqrt{\rho_i}$ and $B=\sqrt{E_x}\Omega_{\rho_i}( G_{A;i})\sqrt{\rho_i}$. This gives
\begin{align}\label{SM_quantum_saturation}
\vect a_{i}^{\top}\mathcal C^{(i)}\vect a_i
&=
\sum_{x\in X} \frac{\bigl(\Re\,\tr[E_x\Omega_{\rho_i}( G_{A;i})\rho_i]\bigr)^2}{\tr[E_x\rho_i]}
\nonumber\\
&\le
\sum_{x\in X} \frac{|\tr[E_x\Omega_{\rho_i}( G_{A;i})\rho_i]|^2}{\tr[E_x\rho_i]}
\nonumber\\
&\le
\sum_{x\in X} \tr[E_x\Omega_{\rho_i}( G_{A;i})\rho_i\Omega_{\rho_i}( G_{A;i})]
\nonumber\\
&=
\tr[ G_{A;i}\Omega_{\rho_i}( G_{A;i})]\n
&=
\sum_{j,k=1}^n A_{ji}\,\tr[G_j\Omega_{\rho_i}(G_k)]\,A_{ki}.
\end{align}
This motivates the definition of the quantum information matrix
\begin{equation}\label{SM_QIM}
[\mathcal Q^{(i)}]_{jk}=\tr[G_j\Omega_{\rho_i}(G_k)].
\end{equation}
Hence
\begin{equation}
\vect a_i^{\top}\mathcal C^{(i)}\vect a_{i}
\le
\vect a_i^{\top}\mathcal Q^{(i)}\vect a_{i},
\end{equation}
and the quantum bound for a fixed hierarchy matrix $A$ is
\begin{equation}\label{SM_BQA}
\BQ(A)
=
\frac{\left(\sum_{i=1}^n w_i\,\vect a_{i}^\top\vect b^{(i)}\right)^2}
{\sum_{i=1}^n w_i\,\vect a_i^{\top}\mathcal Q^{(i)}\vect a_{i}}.
\end{equation}

Applying the Rayleigh quotient maximization theorem exactly as in Sec.~\ref{SM_Optimizatio_A}, but with $\mathcal C^{(i)}$ replaced by $\mathcal Q^{(i)}$, we obtain the quantum hierarchy
\begin{align}\label{SM_Quantum_bounds}
\BQd&=\min_{\{E_x\}}\BCd=\sum_{i=1}^n w_i\frac{(b_{i}^{(i)})^2}{[\mathcal Q^{(i)}]_{ii}},\n
\BQc&=\min_{\{E_x\}}\BCc=\vect b_w^\top\mathcal Q_w^+\vect b_w,\n
\BQg&=\min_{\{E_x\}}\BCg=\sum_{i=1}^n w_i\,\vect b^{(i)\top}(\mathcal Q^{(i)})^+\vect b^{(i)},
\end{align}
which correspond to the global Cram\'er--Rao, global Barankin, and fully global quantum bounds in the main text.

\subsection{Automatic Finite Condition}
\label{SM_quantum_finite_cond}

Since the derivation of the quantum bounds also uses the Rayleigh quotient maximization theorem, we must verify the finite condition
\begin{equation}\label{eq:SM_quantum_finite_condition}
\vect b^{(i)}\in \operatorname{Range}(\mathcal Q^{(i)}),
\end{equation}
which ensures that the quantum bounds remain finite.

Let $\vect a\in\ker(\mathcal Q^{(i)})$, so that $\mathcal Q^{(i)}\vect a=0$. Then $\vect a^\top\mathcal Q^{(i)}\vect a=0$. Since $\mathcal Q^{(i)}\ge \mathcal C^{(i)}$ by construction and $\mathcal C^{(i)}$ is positive semidefinite, we obtain
\begin{equation}
0=\vect a^\top\mathcal Q^{(i)}\vect a \ge \vect a^\top\mathcal C^{(i)}\vect a\ge 0,
\end{equation}
which implies $\vect a^\top\mathcal C^{(i)}\vect a=0$.

Using the definition of the classical information matrix,
\begin{equation}
\mathcal C^{(i)}=\sum_{x\in X} \frac{\vect g(x)\vect g(x)^\top}{p(x|\theta_i)},
\end{equation}
we have
\begin{equation}
\vect a^\top\mathcal C^{(i)}\vect a
=
\sum_{x\in X} \frac{[\vect a^\top\vect g(x)]^2}{p(x|\theta_i)}
=
0.
\end{equation}
Since each term is nonnegative, this implies
\begin{equation}
\vect a^\top\vect g(x)=0,\qquad \forall x\in X.
\end{equation}

Now recall that
\begin{equation}
\vect b^{(i)}=\sum_{x\in X} \vect g(x)\,\Delta\hat\theta_i(x).
\end{equation}
Taking the inner product with $\vect a$ yields
\begin{equation}
\vect a^\top \vect b^{(i)}
=
\sum_{x\in X} [\vect a^\top\vect g(x)]\,\Delta\hat\theta_i(x)
=
0.
\end{equation}
Hence every vector in $\ker(\mathcal Q^{(i)})$ is orthogonal to $\vect b^{(i)}$, i.e. $\vect b^{(i)}\perp\ker(\mathcal Q^{(i)})$.
For a real symmetric matrix, $\operatorname{Range}(\mathcal Q^{(i)})=(\ker \mathcal Q^{(i)})^\perp$, and therefore
\begin{equation}
\vect b^{(i)}\in\operatorname{Range}(\mathcal Q^{(i)}).
\end{equation}
Thus the finite condition holds automatically.

\subsection{Saturation Condition}

We now collect the saturation conditions underlying the fully global quantum bound.
In Eq.~\eqref{SM_quantum_saturation}, the derivation of the quantum bound for a fixed hierarchy matrix used the passage from the real part to the modulus and then a Cauchy--Schwarz inequality for each outcome. Equality requires, for every \(x\) and every active \(i\),
\begin{align}
&\text{(i)}\quad {\rm Im}\,\tr\!\left[E_x\Omega_{\rho_i}( G_{A;i})\rho_i\right]=0,\\
&\text{(ii)}\quad \sqrt{E_x}\sqrt{\rho_i}
=
\alpha_{x i}\,\sqrt{E_x}\Omega_{\rho_i}( G_{A;i})\sqrt{\rho_i},
\end{align}
with a scalar \(\alpha_{x i}\in\mathbb C\) that may depend on the outcome and on the active parameter value.
To obtain the fully global quantum bound, we must also impose the optimal hierarchy-matrix condition
\begin{equation}
\text{(iii)}\quad
\vect a_{i}=(\mathcal Q^{(i)})^+\vect b^{(i)},
\qquad
\forall i:\; w_i\neq 0.
\end{equation}

Condition (ii) may be written with real \(\alpha_{x i}\) whenever condition (i) holds. Multiplying condition (ii) on the left by \(\sqrt{\rho_i}\sqrt{E_x}\) and taking the trace gives
\begin{equation}
\tr[E_x\rho_i]
=
\alpha_{x i}\,\tr[E_x\Omega_{\rho_i}( G_{A;i})\rho_i].
\end{equation}
Thus the saturation conditions for \(\BQg=\BCg\) can be written as follows: for all outcomes \(x\) and all active \(i\), there exist real scalars \(\alpha_{x i}\) such that
\begin{align}
&\text{(a)}\quad
\sqrt{E_x}\sqrt{\rho_i}
=
\alpha_{x i}\,\sqrt{E_x}\Omega_{\rho_i}( G_{A;i})\sqrt{\rho_i},
\label{SM_a}\\
&\text{(b)}\quad
\vect a_{i}=(\mathcal Q^{(i)})^+\vect b^{(i)}.
\label{SM_b}
\end{align}
where
\begin{equation}\label{SM_generalized_SLD}
 G_{A;i}=\sum_{j=1}^n A_{ji}G_j.
\end{equation}

Condition (a) ensures that the measurement is optimal for a given hierarchy matrix $A$, while condition (b) guarantees that the hierarchy matrix itself is optimally chosen. Together, these characterize the optimal measurement and the optimal hierarchy matrix that achieve the fully global quantum bound. The corresponding conditions for the other quantum bounds are analogous and will not be discussed separately.

\section{Optimal Quantum Measurements}

\subsection{Proof of Observation~3}

We now discuss the existence and construction of optimal measurements for the fully global quantum bound. 
Throughout this section, we assume the regularity condition \(p(x|\theta)>0\) for all \(\theta\in\Theta\) and \(x\in X\).

\textbf{Observation~3.}
Let $G_j=\partial_\theta\rho(\theta_j)$ and $\mathcal Y=\operatorname{Span}\{G_j\mid j=1,\dots,n\}$.
Suppose that the generalized SLD maps are closed on the global score span up to scalar operators,
\begin{equation}\label{eq:SM_identity_extended_condition}
\Omega_{\rho_i}(\mathcal Y)\subseteq \mathcal Y\oplus \mathbb R I,
\qquad \forall i:w_i\neq0 .
\end{equation}
Then there exists an optimal measurement \(\{E_x^{\rm opt}\}\) such that $\BQg=\BCg(\{E_x^{\rm opt}\}).$
The POVM elements \(E_x^{\rm opt}\) are the eigenprojectors of
\begin{equation}\label{eq:SM_obs_3}
M_{\rm FG}\equiv
\sum_{j=1}^n (T^+\vect b)_jG_j
=
\sum_x p_xE_x^{\rm opt},
\end{equation}
where $p_x\in\mathbb R$ are the corresponding eigenvalues, 
$\vect b$ is the bias vector defined in Eq.~\eqref{eq:SM_bias_matrix} with $\vect b^{(i)}=\vect b$ for all $i$, and
\begin{equation}
T_{jk}=\tr[G_jG_k].
\end{equation}

\begin{proof}
Our aim is to construct POVMs satisfying the saturation condition~\eqref{SM_a} together with the optimal hierarchy-matrix condition~\eqref{SM_b} for the fully global quantum bound.

Since we consider the test observables $G_j=\partial_\theta\rho(\theta_j)$, the bias vector is independent of $i$, i.e.,
\begin{equation}
[\vect b^{(i)}]_j
=
\sum_x \partial_\theta p(x|\theta_j)
\bigl(\hat\theta(x)-\mathbb E_{\theta_i}[\hat\theta(x)]\bigr)
=
\partial_\theta\sum_x p(x|\theta_j)\hat\theta(x).
\end{equation}
Hence we may write $\vect b^{(i)}=\vect b$ for all $i$.
For convenience, we use the vectorization map $O\mapsto |O\rangle\rangle$, defined by $|O\rangle\rangle=\sum_{m,n}O_{mn}|m\rangle\otimes|n\rangle$, with inner product $\langle\langle P|O\rangle\rangle=\tr[P^\dagger O]$. Let $|\partial_i\rho\rangle\rangle\equiv|\partial_\theta\rho(\theta_i)\rangle\rangle$ and
\begin{equation}
\mathcal Y=\operatorname{Span}\{|\partial_i\rho\rangle\rangle\mid i=1,\dots,n\}.
\end{equation}
By the identity-extended invariance condition~\eqref{eq:SM_identity_extended_condition}, the action of $\Omega_{\rho_i}$ on the score span can be written as
\begin{equation}\label{eq:SM_obs3_1_modified}
\Omega_{\rho_i}|\partial_j\rho\rangle\rangle
=
\sum_{k=1}^n V_{kj}^{(i)}|\partial_k\rho\rangle\rangle
+
r_{ij}^{(i)}|I\rangle\rangle,
\end{equation}
where $V^{(i)}$ is a real matrix and $r_{ij}^{(i)}\in\mathbb R$. Thus, the only component generated outside the global score span is the identity direction.
Using Eq.~\eqref{SM_generalized_SLD}, the generalized SLD associated with the $i$-th column of the hierarchy matrix satisfies
\begin{align}\label{eq:SM_obs3_key_modified}
\Omega_{\rho_i}| G_{A;i}\rangle\rangle
&=
\sum_j A_{ji}\Omega_{\rho_i}|\partial_j\rho\rangle\rangle =
\sum_{j,k}A_{ji}V_{kj}^{(i)}|\partial_k\rho\rangle\rangle
+
\sum_j A_{ji}r_{ij}^{(i)}|I\rangle\rangle =
\sum_k [V^{(i)}\vect a^{(i)}]_k|\partial_k\rho\rangle\rangle
+
r_i(A)|I\rangle\rangle,
\end{align}
where $\vect a^{(i)}=(A_{1i},\dots,A_{ni})^\top$ and $r_i(A)=\sum_j A_{ji}r_{ij}^{(i)}$.

On the other hand, since $G_j=\partial_\theta\rho(\theta_j)$ is traceless, $\tr(G_jI)=\partial_\theta\tr\rho(\theta_j)=0$. Therefore, the identity component in Eq.~\eqref{eq:SM_obs3_1_modified} does not contribute to the quantum information matrix:
\begin{align}\label{eq:SM_obs3_2_modified}
[\mathcal Q^{(i)}]_{jk}
&=
\langle\langle\partial_j\rho|\Omega_{\rho_i}|\partial_k\rho\rangle\rangle =
\sum_{m=1}^n V_{mk}^{(i)}
\langle\langle\partial_j\rho|\partial_m\rho\rangle\rangle =
\sum_{m=1}^n T_{jm}V_{mk}^{(i)} .
\end{align}
With $T_{jk}=\langle\langle\partial_j\rho|\partial_k\rho\rangle\rangle=\tr[(\partial_\theta\rho(\theta_j))(\partial_\theta\rho(\theta_k))]$, this gives
\begin{equation}\label{eq:SM_obs3_00_modified}
\mathcal Q^{(i)}=TV^{(i)}.
\end{equation}
From the finite condition proved in Sec.~\ref{SM_quantum_finite_cond}, we know that $\vect b\in\operatorname{Range}(\mathcal Q^{(i)})$. The optimal hierarchy-matrix condition~\eqref{SM_b} therefore gives $\mathcal Q^{(i)}\vect a_{i}=\vect b$. Using Eq.~\eqref{eq:SM_obs3_00_modified}, this becomes $TV^{(i)}\vect a_{i}=\vect b$. Hence,
\begin{equation}\label{eq:SM_obs3_imo0_modified}
V^{(i)}\vect a_{i}
=
T^+\vect b+\vect t^{(i)},
\qquad
\vect t^{(i)}\in\ker(T).
\end{equation}
Substituting Eq.~\eqref{eq:SM_obs3_imo0_modified} into Eq.~\eqref{eq:SM_obs3_key_modified}, we obtain
\begin{align}\label{eq:SM_obs3_imp_2_modified}
\Omega_{\rho_i}| G_{A;i}\rangle\rangle
&=
\sum_k [T^+\vect b]_k|\partial_k\rho\rangle\rangle
+
\sum_k t^{(i)}_{k}|\partial_k\rho\rangle\rangle
+
r_i(A)|I\rangle\rangle.
\end{align}
The second term in Eq.~\eqref{eq:SM_obs3_imp_2_modified} vanishes identically. Indeed, since $\vect t^{(i)}\in\ker(T)$,
\begin{equation}
\left\|
\sum_k t^{(i)}_{k}G_k
\right\|_{\rm HS}^2
=
(\vect t^{(i)})^\top T\vect t^{(i)}
=
0.
\end{equation}
Because $\sum_k t^{(i)}_{k}G_k$ is Hermitian, this implies $\sum_k t^{(i)}_{k}G_k=0$. Therefore Eq.~\eqref{eq:SM_obs3_imp_2_modified} reduces to
\begin{equation}\label{eq:SM_optimized_SLD_scalar_shift}
\Omega_{\rho_i}( G_{A;i})
=
M_{\rm FG}+r_i(A)I,
\qquad
M_{\rm FG}=\sum_k(T^+\vect b)_kG_k.
\end{equation}

Let
$M_{\rm FG}=\sum_xp_xE_x^{\rm opt}$
be its spectral decomposition. Then
\[
\Omega_{\rho_i}( G_{A;i})
=
\sum_x[p_x+r_i(A)]E_x^{\rm opt}.
\]
Therefore
\[
E_x^{\rm opt}\Omega_{\rho_i}( G_{A;i})
=
[p_x+r_i(A)]E_x^{\rm opt}.
\]
Equivalently,
\[
\sqrt{E_x^{\rm opt}}\Omega_{\rho_i}( G_{A;i})\sqrt{\rho_i}
=
[p_x+r_i(A)]\sqrt{E_x^{\rm opt}}\sqrt{\rho_i}.
\]
Hence Eq.~\eqref{SM_a} is satisfied with $\alpha_{x,i}=\frac{1}{p_x+r_i(A)}$
for every nonzero eigenvalue \(p_x+r_i(A)\). Outcomes with $p_x+r_i(A)=0$ have zero score contribution and saturate the Cauchy--Schwarz inequality trivially.
Thus both Eqs.~\eqref{SM_a} and~\eqref{SM_b} are satisfied. Consequently,
$\BQg=\BCg(\{E_x^{\rm opt}\})$,
with $E_x^{\rm opt}$ given by the eigenprojectors of $M_{\rm FG}$.
\end{proof}

\subsection{Example}

We illustrate Observation~3 with a qubit model
\begin{equation}\label{eq:SM_example}
\rho(\theta)=\frac{\mathbb I+\vect r_\theta^\top\vect\sigma}{2},
\qquad
\vect r_\theta=(d\cos\theta,d\sin\theta,0),
\end{equation}
where $\vect\sigma=(\sigma_x,\sigma_y,\sigma_z)$ and $0<d<1$.

We take the parameter domain $\Theta=\{\theta_1=0,\theta_2=\pi/4\}$ and choose the test observables $G_k=\partial_\theta\rho(\theta_k)$. Let $\rho_i=\rho(\theta_i)$ and denote its eigenstates by $\ket{\pm_i}$, with eigenvalues $p_{i;\pm}=(1\pm d)/2$. Using Eq.~\eqref{SM_def_SLD}, we obtain
\begin{align}
\tr[(\partial_\theta\rho_j)\Omega_{\rho_i}(\partial_\theta\rho_k)]
&=
\sum_{\alpha,\beta=\pm}
\frac{2}{p_{i;\alpha}+p_{i;\beta}}
\bra{\alpha_i}\partial_\theta\rho_j\ket{\beta_i}
\bra{\beta_i}\partial_\theta\rho_k\ket{\alpha_i}
\nonumber\\
&=
\frac{2}{1+d}\bra{+_i}\partial_\theta\rho_j\ket{+_i}\bra{+_i}\partial_\theta\rho_k\ket{+_i}
+
\frac{2}{1-d}\bra{-_i}\partial_\theta\rho_j\ket{-_i}\bra{-_i}\partial_\theta\rho_k\ket{-_i}
\nonumber\\
&\quad
+2\bra{+_i}\partial_\theta\rho_j\ket{-_i}\bra{-_i}\partial_\theta\rho_k\ket{+_i}
+
2\bra{-_i}\partial_\theta\rho_j\ket{+_i}\bra{+_i}\partial_\theta\rho_k\ket{-_i}
\nonumber\\
&=
\frac{d^4\sin(\theta_{ij})\sin(\theta_{ik})}{1-d^2}
+
d^2\cos(\theta_{jk}),
\end{align}
where $\theta_{ij}\equiv \theta_i-\theta_j$, and we used
\begin{equation}
\bra{\pm_i}\partial_\theta\rho_j\ket{\pm_i}
=
\pm\frac{d\sin(\theta_{ij})}{2},
\qquad
\bra{+_i}\partial_\theta\rho_j\ket{-_i}
=
\frac{id\cos(\theta_{ij})}{2}.
\end{equation}
For $\Theta=\{0,\pi/4\}$, the quantum information matrices defined in Eq.~\eqref{SM_QIM} are
\begin{equation}\label{sm_QQ}
\mathcal Q^{(1)}
=
d^2
\begin{pmatrix}
1 & 1/\sqrt{2}\\
1/\sqrt{2} & (2-d^2)/(2-2d^2)
\end{pmatrix},
\qquad
\mathcal Q^{(2)}
=
d^2
\begin{pmatrix}
(2-d^2)/(2-2d^2) & 1/\sqrt{2}\\
1/\sqrt{2} & 1
\end{pmatrix}.
\end{equation}

As an example, we choose the bias vector
\begin{equation}
\vect b=(1,1/\sqrt{2})^\top,
\end{equation}
which satisfies the finite condition $\vect b\in\operatorname{Range}(\mathcal Q^{(i)})$. Since $\mathcal Q^{(i)}$ is invertible for $0<d<1$, the quantum bounds in Eq.~\eqref{SM_Quantum_bounds} evaluate to
\begin{align}\label{Sm_eg_g}
&\BQd
=
\sum_{i=1}^n w_i\frac{B_{ii}^2}{[\mathcal Q^{(i)}]_{ii}}
=
\frac{w_1}{d^2}
+
\frac{w_2}{2d^2},\n 
&\BQg
=
\sum_{i=1}^n w_i\,\vect b^\top(\mathcal Q^{(i)})^+\vect b
=
\frac{w_1}{d^2}
+
\frac{w_2(2-d^2)}{2d^2},\n
&\mathcal B_{\rm Q}^{\rm GBar}
=
\vect b^\top\mathcal Q_w^{-1}\vect b
=
\frac{
(1-d^2)(2-d^2w_2)
}{
d^2\left[2(1-d^2)+d^4w_1w_2\right]
},
\end{align}
where $B_{11}=1$ and $B_{22}=1/\sqrt{2}$. Hence, $\BQd<\BQg$ and $\BQc<\BQg$ whenever $w_2\neq 0$ and $d\neq 1$.

We now show that the smaller global quantum Cram\'er--Rao bound cannot be attained by the optimal measurement that saturates the fully global quantum bound. The correlation matrix appearing in Eq.~\eqref{eq:SM_obs_3} is
\begin{equation}
T
=
d^2
\begin{pmatrix}
1 & 1/\sqrt{2}\\
1/\sqrt{2} & 1
\end{pmatrix},
\end{equation}
which gives
\begin{equation}
T^{-1}\vect b=(1/d^2,0)^\top.
\end{equation}
Therefore,
\begin{equation}
\sum_{j=1}^n (T^+\vect b)_j G_j
=
\frac{1}{d^2}\,\partial_\theta\rho_1
=
\frac{1}{2d}\sigma_y.
\end{equation}
According to Observation~3, the optimal quantum measurement is thus the projective measurement along the $y$ direction,
\begin{equation}
E_x^{\rm opt}=\frac{\mathbb I+x\sigma_y}{2},
\qquad x=\pm1.
\end{equation}

For this measurement, the outcome probabilities are
\begin{equation}
p(x|\theta)=\frac{1+xd\sin\theta}{2},
\end{equation}
which is precisely the two-outcome form of Eq.~\eqref{SM_2_outcomes} with $f(\theta)=d\sin\theta$.
Using the classical results in Eqs.~\eqref{SM_eg_diag} and \eqref{SM_eg_gen}, we obtain
\begin{align}\label{Sm_eg_c}
\BCg(\{E_x^{\rm opt}\})
&=
\BCd(\{E_x^{\rm opt}\})
=
\alpha^2\sum_i w_i(1-f_i^2)
=
\frac{w_1}{d^2}
+
\frac{w_2(2-d^2)}{2d^2},
\nonumber\\
\BCc(\{E_x^{\rm opt}\})
&=
\frac{\alpha^2}{\sum_i w_i/(1-f_i^2)}
=
\frac{2-d^2}{
d^2\left[w_1(2-d^2)+2w_2\right]
},
\end{align}
where $\alpha=1/d$ follows from $\vect b=\alpha \vect f'$.

Comparing Eqs.~\eqref{Sm_eg_g} and \eqref{Sm_eg_c}, we find
\begin{equation}
\BCg(\{E_x^{\rm opt}\})
=
\BQg .
\end{equation}
Thus the fully global quantum bound is saturated by the optimal measurement.
By contrast, the lower levels of the quantum hierarchy are not saturated by this measurement. 
Indeed, for $0<d<1$ and $w_1,w_2>0$,
\begin{equation}
\BCd(\{E_x^{\rm opt}\})
>
\BQd,
\qquad
\BCc(\{E_x^{\rm opt}\})
>
\BQc .
\end{equation}
Therefore, in this example, only the fully global quantum bound is tight, whereas the global quantum Cram\'er--Rao and global quantum Barankin bounds are strictly lower than the corresponding classical bounds generated by the optimal measurement.

\section{Generalized Van Trees Bound in Bayesian Estimation}

\subsection{Derivation of Classical Bounds}

We now extend the same organizing principle to Bayesian estimation. Consider a single continuous parameter $\theta\in\mathbb R$ with prior distribution $p(\theta)$, normalized as $\int d\theta\,p(\theta)=1$. The relevant figure of merit is the Bayesian mean-square error
\begin{equation}
\mathrm{BMSE}(\hat\theta)
=
\int d\theta\,p(\theta)\,\mathbb E_\theta\!\left[(\hat\theta(x)-\theta)^2\right],
\end{equation}
where $\mathbb E_\theta[\cdot]=\sum_x(\cdot)\,p(x|\theta)$.

In analogy with the discrete hierarchy matrix, we introduce a continuous hierarchy function $A(\theta,\theta')$ and define the Bayesian score function
\begin{equation}\label{eq:SM_bayesian_score}
S_{\rm B}(x,\theta)
=
\frac{\int d\theta'\,A(\theta,\theta')\,\partial_{\theta'}p(x,\theta')}
{p(x,\theta)},
\end{equation}
where $p(x,\theta)=p(x|\theta)p(\theta)$ is the joint probability density.
Applying the Cauchy--Schwarz inequality to $\hat\theta(x)-\theta$ and $S_{\rm B}(x,\theta)$ with respect to the joint measure $p(x,\theta)$ gives
\begin{equation}\label{eq:SM_CS_bayesian}
\left(
\int d\theta\,p(\theta)\,\mathbb E_\theta[(\hat\theta(x)-\theta)S_{\rm B}(x,\theta)]
\right)^2
\le
\mathrm{BMSE}(\hat\theta)
\int d\theta\,p(\theta)\,\mathbb E_\theta[S_{\rm B}(x,\theta)^2].
\end{equation}

We first evaluate the left-hand side. Using Eq.~\eqref{eq:SM_bayesian_score},
\begin{equation}\label{SM_B1}
\int d\theta\,p(\theta)\,\mathbb E_\theta[(\hat\theta(x)-\theta)S_{\rm B}(x,\theta)]
=
\int d\theta \sum_x (\hat\theta(x)-\theta)\int d\theta'\,A(\theta,\theta')\,\partial_{\theta'}p(x,\theta').
\end{equation}
Assume that the hierarchy function satisfies the regularity conditions
\begin{equation}\label{SM_coni_ii}
\lim_{\theta'\to\pm\infty}A(\theta,\theta')\,p(x,\theta')=0,
\qquad
\partial_{\theta'}A(\theta,\theta')=-\partial_\theta A(\theta,\theta'),
\end{equation}
together with
\begin{equation}\label{SM_coniii_iv}
\lim_{\theta\to\pm\infty}A(\theta,\theta')(\hat\theta(x)-\theta)=0,
\qquad
\int d\theta\,A(\theta,\theta')=1.
\end{equation}
Integrating by parts first over $\theta'$ and then over $\theta$, we obtain
\begin{align}
\int d\theta\,p(\theta)\,\mathbb E_\theta[(\hat\theta(x)-\theta)S_{\rm B}(x,\theta)]
&=
-\int d\theta \sum_x (\hat\theta(x)-\theta)\int d\theta'\,p(x,\theta')\,\partial_{\theta'}A(\theta,\theta')
\nonumber\\
&=
\int d\theta \sum_x (\hat\theta(x)-\theta)\int d\theta'\,p(x,\theta')\,\partial_\theta A(\theta,\theta')
\nonumber\\
&=
-\int d\theta' \sum_x p(x,\theta')\int d\theta\,A(\theta,\theta')\,\partial_\theta(\hat\theta(x)-\theta)
\nonumber\\
&=
\int d\theta' \sum_x p(x,\theta')\int d\theta\,A(\theta,\theta')
=
1.
\end{align}

The second factor on the right-hand side of Eq.~\eqref{eq:SM_CS_bayesian} becomes
\begin{align}
\int d\theta\,p(\theta)\,\mathbb E_\theta[S_{\rm B}(x,\theta)^2]
&=
\int d\theta \sum_x
\frac{\left[\int d\theta'\,A(\theta,\theta')\,\partial_{\theta'}p(x,\theta')\right]^2}
{p(x,\theta)}
\nonumber\\
&=
\iiint d\theta\,d\theta'\,d\theta''\,
A(\theta,\theta')\,\mathcal C_{\rm B}^{(\theta)}(\theta',\theta'')\,A(\theta,\theta''),
\end{align}
where
\begin{equation}\label{eq:SM_bayesian_info_kernel}
\mathcal C_{\rm B}^{(\theta)}(\theta',\theta'')
=
\sum_x
\frac{[\partial_{\theta'}p(x,\theta')][\partial_{\theta''}p(x,\theta'')]}
{p(x,\theta)}
\end{equation}
is the Bayesian information kernel.

Substituting these expressions into Eq.~\eqref{eq:SM_CS_bayesian} yields the classical Bayesian bound
\begin{equation}\label{eq:SM_bayesian_bound}
\mathrm{BMSE}(\hat\theta)
\ge
\frac{1}{
\iiint d\theta\,d\theta'\,d\theta''\,
A(\theta,\theta')\,\mathcal C_{\rm B}^{(\theta)}(\theta',\theta'')\,A(\theta,\theta'')
}.
\end{equation}
This bound holds for any hierarchy function $A(\theta,\theta')$ satisfying the regularity conditions in Eqs.~\eqref{SM_coni_ii} and \eqref{SM_coniii_iv}.

\subsection{Choice of Hierarchy Function}

A natural choice for the hierarchy function is
\begin{equation}\label{eq:SM_kernel}
A_\epsilon(\theta,\theta')
=
\frac{1}{\epsilon}K\!\left(\frac{\theta-\theta'}{\epsilon}\right),
\end{equation}
where the kernel $K$ satisfies
\begin{equation}\label{SM_cond_K}
\int du\,K(u)=1,
\qquad
\lim_{|u|\to\infty}uK(u)=0,
\end{equation}
and $\epsilon>0$ is a bandwidth parameter.
It is straightforward to verify that Eq.~\eqref{eq:SM_kernel} satisfies the regularity conditions in Eqs.~\eqref{SM_coni_ii} and \eqref{SM_coniii_iv}.

\subsection{Van Trees Bound as a Special Limit}

In the limit $\epsilon\to0$, the hierarchy function approaches a Dirac delta function,
\begin{equation}
\lim_{\epsilon\to0}A_\epsilon(\theta,\theta')=\delta(\theta-\theta').
\end{equation}
Equation~\eqref{eq:SM_bayesian_bound} then reduces to the Van Trees bound,
\begin{equation}
\mathrm{BMSE}(\hat\theta)
\ge
\frac{1}{\int d\theta\,\mathcal C_{\rm B}^{(\theta)}(\theta,\theta)}.
\end{equation}
Indeed,
\begin{align}
\int d\theta\,\mathcal C_{\rm B}^{(\theta)}(\theta,\theta)
&=
\sum_x \int d\theta\,
\frac{[p(x|\theta)\partial_\theta p(\theta)+p(\theta)\partial_\theta p(x|\theta)]^2}
{p(x|\theta)p(\theta)}
\nonumber\\
&=
\int d\theta\,p(\theta)\left(\sum_x \frac{[\partial_\theta p(x|\theta)]^2}{p(x|\theta)}\right)
+
\int d\theta\,\frac{[\partial_\theta p(\theta)]^2}{p(\theta)}.
\end{align}
The first term is the prior-weighted classical Fisher information, while the second is the contribution from the prior. Hence, the Van Trees bound appears as the local limit of the more general hierarchy-based Bayesian construction.

\subsection{From Classical to Quantum Bayesian Bounds}

Following the same strategy as in Sec.~\ref{SM_cl_to_quantum}, the classical Bayesian bound admits a direct quantum generalization. Replacing the classical joint distribution by the generalized quantum state
\begin{equation}
\rho[\theta]=p(\theta)\rho(\theta),
\end{equation}
we obtain the quantum Bayesian bound
\begin{equation}\label{eq:SM_bayesian_bound_qm}
\mathrm{BMSE}(\hat\theta)
\ge
\frac{1}{
\iiint d\theta\,d\theta'\,d\theta''\,
A(\theta,\theta')\,\mathcal Q_{\rm B}^{(\theta)}(\theta',\theta'')\,A(\theta,\theta'')
},
\end{equation}
where
\begin{equation}
\mathcal Q_{\rm B}^{(\theta)}(\theta',\theta'')
=
\tr\!\left[
(\partial_{\theta'}\rho[\theta'])\,
\Omega_{\rho[\theta]}\!\bigl(\partial_{\theta''}\rho[\theta'']\bigr)
\right].
\end{equation}
Here $\Omega_{\rho[\theta]}$ is the generalized SLD superoperator introduced in Sec.~\ref{SM_cl_to_quantum}. Equation~\eqref{eq:SM_bayesian_bound_qm} is the quantum counterpart of Eq.~\eqref{eq:SM_bayesian_bound} and reduces to the quantum Van Trees bound in the local limit $\epsilon\to 0$.

\subsection{Example}

We again consider the noisy qubit state
\begin{equation}
\rho(\theta)=\frac{1+\vect r_\theta^\top\vect\sigma}{2},
\qquad
\vect r_\theta=(d\cos\theta,d\sin\theta,0),
\end{equation}
with Gaussian prior
\begin{equation}
p(\theta)=\frac{1}{\sqrt{2\pi}\sigma_p}\exp\!\left(-\frac{\theta^2}{2\sigma_p^2}\right).
\end{equation}
As an explicit example, we choose a Gaussian hierarchy function
\begin{equation}
A_\epsilon(\theta,\theta')
=
\frac{1}{\sqrt{2\pi}\epsilon}
\exp\!\left[-\frac{(\theta-\theta')^2}{2\epsilon^2}\right].
\end{equation}
Define
\begin{equation}
\mathbb G_A(\theta)
\equiv
\int d\theta'\,
A_\epsilon(\theta,\theta')\,\partial_{\theta'}\rho[\theta'].
\end{equation}
Then Eq.~\eqref{eq:SM_bayesian_bound_qm} becomes
\begin{equation}\label{SM_g_vtb}
\mathrm{BMSE}(\hat\theta)
\ge
\frac{1}{
\int d\theta\,
\tr\!\left[
\mathbb G_A(\theta)\,
\Omega_{\rho[\theta]}\!\bigl(\mathbb G_A(\theta)\bigr)
\right]
}.
\end{equation}

Let $\ket{\pm_\theta}$ be the eigenstates of $\rho(\theta)$, with eigenvalues
\begin{equation}
p_{\theta;\pm}=\frac{(1\pm d)p(\theta)}{2}.
\end{equation}
By the definition of the generalized SLD operator,
\begin{align}\label{SM_B_eg_1}
\tr\!\left[
\mathbb G_A(\theta)\Omega_{\rho[\theta]}\!\bigl(\mathbb G_A(\theta)\bigr)
\right]
&=
\frac{2[\mathbb G_A(\theta)]_{++}^2}{(1+d)p(\theta)}
+
\frac{2[\mathbb G_A(\theta)]_{--}^2}{(1-d)p(\theta)}
+
\frac{4|[\mathbb G_A(\theta)]_{+-}|^2}{p(\theta)},
\end{align}
where
\begin{equation}
[\mathbb G_A(\theta)]_{\alpha\beta}
=
\bra{\alpha_\theta}\mathbb G_A(\theta)\ket{\beta_\theta},
\qquad
\alpha,\beta=\pm.
\end{equation}

A direct calculation gives
\begin{align}\label{SM_b_eg_2}
[\mathbb G_A(\theta)]_{++}
&=
\int d\theta'\,
A_\epsilon(\theta,\theta')
\left[
\frac{\partial_{\theta'}p(\theta')}{2}\bigl(1+d\cos(\theta-\theta')\bigr)
+
\frac{dp(\theta')}{2}\sin(\theta-\theta')
\right],
\nonumber\\
[\mathbb G_A(\theta)]_{--}
&=
\int d\theta'\,
A_\epsilon(\theta,\theta')
\left[
\frac{\partial_{\theta'}p(\theta')}{2}\bigl(1-d\cos(\theta-\theta')\bigr)
-
\frac{dp(\theta')}{2}\sin(\theta-\theta')
\right],
\nonumber\\
[\mathbb G_A(\theta)]_{+-}
&=
\frac{id}{2}\int d\theta'\,
A_\epsilon(\theta,\theta')
\left[
-\partial_{\theta'}p(\theta')\sin(\theta-\theta')
+
p(\theta')\cos(\theta-\theta')
\right].
\end{align}

To evaluate these expressions, define
\begin{align}
f_1(\theta)
&=
\int d\theta'\,
A_\epsilon(\theta,\theta')\,p(\theta')\,e^{i\theta'},
\nonumber\\
f_2(\theta)
&=
\int d\theta'\,
A_\epsilon(\theta,\theta')\,\partial_{\theta'}p(\theta')\,e^{i\theta'},
\nonumber\\
f_3(\theta)
&=
\int d\theta'\,
A_\epsilon(\theta,\theta')\,\partial_{\theta'}p(\theta').
\end{align}
For the Gaussian prior and Gaussian hierarchy kernel, these integrals can be evaluated analytically. Introducing
\begin{equation}
\Sigma^2\equiv \epsilon^2+\sigma_p^2,
\end{equation}
we obtain
\begin{align}
f_1(\theta)
&=
\frac{1}{\sqrt{2\pi\Sigma^2}}
\exp\!\left[
-\frac{\theta^2}{2\Sigma^2}
-\frac{\epsilon^2\sigma_p^2}{2\Sigma^2}
+i\frac{\sigma_p^2\theta}{\Sigma^2}
\right],
\nonumber\\
f_3(\theta)
&=
-\frac{\theta}{\Sigma^2\sqrt{2\pi\Sigma^2}}
\exp\!\left(-\frac{\theta^2}{2\Sigma^2}\right).
\end{align}
Using
\begin{equation}
\frac{d f_1(\theta)}{d\theta}
=
-\frac{\theta}{\epsilon^2}f_1(\theta)
-\frac{\sigma_p^2}{\epsilon^2}f_2(\theta),
\end{equation}
we further find
\begin{equation}
f_2(\theta)
=
-f_1(\theta)\frac{\theta+i\epsilon^2}{\Sigma^2}.
\end{equation}

It is convenient to define
\begin{equation}
g(\theta)
=
\frac{1}{\sqrt{2\pi\Sigma^2}}
\exp\!\left[
-\frac{\theta^2}{2\Sigma^2}
-\frac{\epsilon^2\sigma_p^2}{2\Sigma^2}
\right],
\qquad
\alpha=\frac{\epsilon^2}{\Sigma^2},
\end{equation}
so that
\begin{equation}
[f_1(\theta)]^*e^{i\theta}=g(\theta)e^{i\alpha\theta},
\qquad
[f_2(\theta)]^*e^{i\theta}
=
-g(\theta)\frac{\theta-i\epsilon^2}{\Sigma^2}e^{i\alpha\theta}.
\end{equation}
Substituting these expressions into Eq.~\eqref{SM_b_eg_2} gives
\begin{align}\label{SM_b_eg_3}
[\mathbb G_A(\theta)]_{++}
&=
\frac{f_3(\theta)}{2}
-\frac{d\,g(\theta)}{2\Sigma^2}
\left[\theta\cos(\alpha\theta)+\epsilon^2\sin(\alpha\theta)\right]
+\frac{d\,g(\theta)}{2}\sin(\alpha\theta),
\nonumber\\
[\mathbb G_A(\theta)]_{--}
&=
\frac{f_3(\theta)}{2}
+\frac{d\,g(\theta)}{2\Sigma^2}
\left[\theta\cos(\alpha\theta)+\epsilon^2\sin(\alpha\theta)\right]
-\frac{d\,g(\theta)}{2}\sin(\alpha\theta),
\nonumber\\
[\mathbb G_A(\theta)]_{+-}
&=
i\frac{d\,g(\theta)}{2\Sigma^2}
\left[\theta\sin(\alpha\theta)+\sigma_p^2\cos(\alpha\theta)\right].
\end{align}
Substituting Eq.~\eqref{SM_b_eg_3} into Eq.~\eqref{SM_B_eg_1} yields the generalized Bayesian bound~\eqref{SM_g_vtb} as a function of $\epsilon$, which is used to generate Fig.~2 in the main text.



\bibliography{02ref}

\end{document}